\newtheorem{lemma}{Lemma}
\begin{document}

\title{A Proposed Quantum Hamiltonian Encoding Framework for Time Evolution Operator Design of Potential Energy Function}%

\author{Mostafizur Rahaman Laskar$^{1}$,~ Kalyan Dasgupta$^2$, Atanu Bhattacharya $^3$}
\affiliation{$^1$G. S. Sanyal School of Telecommunications, Indian Institute of Technology, Kharagpur, \\$^2$IBM Research, Bangalore, India,\\
$^3$Department of Chemistry, GITAM, Visakhapatnam.}

\begin{abstract}

The exploration of potential energy operators in quantum systems holds paramount significance, offering profound insights into atomic behaviour, defining interactions, and enabling precise prediction of molecular dynamics. By embracing the Born-Oppenheimer picture, we delve into the intricate quantum evolution due to potential energy, facilitating accurate modelling and simulation of atomic phenomena with improved quantum fidelity. This research delves into time evolution operation due to potential energy functions for applications spanning quantum chemistry and condensed matter physics. Challenges in practical implementation, encompassing the formidable curse of dimensionality and intricate entangled interactions, are thoughtfully examined. Drawing upon seminal works, we lay a robust foundation for comprehensive investigations into potential energy landscapes with two proposed algorithms. In one methodology, we have shown a systematic decomposition of the potential energy function into Hadamard bases with composite construction of Pauli-$Z$, identity and $R_Z$ gates which can construct the unitary time evolution operator corresponding to the potential energy with a very high fidelity. The other method is a trade-off between complexity and fidelity, where we propose a novel quantum framework that can reduce the gate complexity from $\Theta(2^n)$ to $\Theta(^nC_r)$ (for some $r<n$). The proposed quantum algorithms are capable of efficiently simulating potential energy operators. The algorithms were implemented in simulators and IBM quantum hardware to prove their efficacy.

\end{abstract}
 
\maketitle

\section{Introduction}
Quantum simulation in atomic chemistry holds tremendous importance as it allows us to explore the complex behaviours of atomic particles with unprecedented precision and accuracy. By leveraging potential energy operators in the quantum Hamiltonian, researchers can gain valuable insights into molecular structures, chemical reactions, and material properties \cite{feynman2000theory, aspuru2005simulated, kassal2011simulating, mcclean2016theory}. Quantum simulations provide a powerful tool for studying atomic behaviour, offering applications in quantum chemistry, condensed matter physics, and quantum computation \cite{manzhos2015neural}. The ability to precisely model the quantum evolution of potential energy is essential for advancing our understanding of atomic systems and unlocking transformative applications in diverse scientific disciplines \cite{ollitrault2021molecular, shao2016communication}.

In atomic chemistry, the study of potential energy operators has a rich history and plays a crucial role in the development of quantum mechanical treatments of molecules and solids. Over the years, researchers have employed diverse theoretical and computational methods to explore potential energy landscapes and understand their influence on atomic behaviour \cite{shao2016communication, kaser2023neural}. State-of-the-art literature has highlighted the potential of machine learning techniques, such as neural network potential energy surfaces, to efficiently approximate complex potential energy landscapes \cite{manzhos2020neural}. Furthermore, quantum algorithms have been developed for grid-based variational time evolution and threshold gate-based quantum simulation, enabling accurate simulations of quantum systems with reduced computational resources \cite{ollitrault2022quantum, sornborger2018toward, kivlichan2017bounding, klymko2022real, Cao-QC_QC}. The classical approach to solving the time evolution problem has been to consider a linear response for small time periods. Other approaches have also used tensor networks. The advantage that Quantum computing gives is in the way the matrix exponentials are easily realized by decomposing them into strings of Pauli matrices. For non-commuting Hamiltonian terms, the Trotter-Suzuki method may be applied for the individual terms \cite{Cao-QC_QC}. Present-day Quantum computers face limitations as the number of trotter terms increases. These methods are mainly envisaged for fault-tolerant quantum systems.

The limitations that noisy present-day quantum systems challenge us with make it all the more important to develop quantum algorithms for Hamiltonian (potential and kinetic energy) operators with reduced computational complexity \cite{shokri2021implementation}. As the system size increases, the computational resources required for accurate simulations grow exponentially, leading to the "curse of dimensionality" that hampers exact simulations for large systems \cite{aspuru2005simulated, kassal2011simulating}. To overcome these limitations, novel quantum Hamiltonian encoding algorithms are needed. Efficient encoding schemes that reduce the gate complexity while preserving the accuracy of the simulation will be instrumental in making potential energy simulations accessible on near-term quantum hardware \cite{motta2021low, peruzzo2014variational, ollitrault2021molecular}. By mitigating the computational challenges, these advancements will pave the way for furthering our understanding of atomic behaviour and unlocking the full potential of quantum technologies in various scientific applications \cite{fu2018ab, ollitrault2022quantum}.

Recent progress in quantum computing and machine learning has sparked interdisciplinary research in the development of neural network potential energy surfaces and their integration into quantum simulations \cite{kaser2023neural}. These data-driven models offer efficient and reliable representations of potential energy landscapes, presenting exciting opportunities for accurate simulations of small molecules and chemical reactions \cite{manzhos2020neural}. Additionally, the combination of quantum algorithms with neural network potentials has shown promise in the real-time evolution of ultra-compact Hamiltonian eigenstates on quantum hardware, further advancing the capabilities of quantum simulations \cite{klymko2022real}. Embracing the potential of machine learning and quantum computing in the study of potential energy opens new frontiers for understanding complex atomic systems and designing innovative materials. Preparing a diagonal matrix using the Walsh series approximation of a given function without ancillary qubit is demonstrated in \cite{welch2014efficient}. Extending this idea to quantum chemistry problems, we demonstrate conceptually new quantum methods with a focus towards low-complexity algorithm design with improved fidelity response. 

In this paper, we propose two different methods to encode the potential energy operator in a 1D lattice for the time evolution problem in a quantum circuit. The first method looks at encoding any arbitrary potential surface using diagonal matrix representations of the Hadamard or Walsh-Hadamard basis functions. This is an exact method and requires $2^n$ basis operations. This kind of approach may be more suitable in the fault-tolerant regime. The second method is a parameterized approximate approach, whereby, assumptions are made on the order of the potential surface (linear, quadratic, or some polynomial order). Based on the acceptability of error bounds, one can fit the potential curve by increasing/decreasing the number of phase gates and their order of entanglement. The higher the number of phase gates and more the entanglements, the higher will be the precision of the fit. A perfect fit would of course need $2^n$ gate operations, involving phase gates and entanglers (controlled phase gates). We believe this kind of a formulation to be more apt for noisy quantum machines.

Our paper is organized as follows. In section \ref{sec:PE operator} we give a brief overview of the structure of the Hamiltonian. Since this paper is about encoding the potential energy operator, we focus mainly on the potential landscape in a 1D lattice. In section \ref{sec:proposed_operator} we detail our proposed methods and discuss the encoding methodologies. Section \ref{sec:gate_complexity} discusses the gate complexity of the two proposed methods. In section \ref{sec:results} we give the results and show the ability of the methods to reconstruct the potential surface. We also give some fidelity results from actual hardware runs on IBM quantum machine to see the real-time performance of the proposed framework. Finally, we conclude our article based on our theoretical formulation and practical realization with experimental results.


\section{Potential Energy Operator and Classical Encoding} \label{sec:PE operator}

In the realm of non-relativistic physics, the Hamiltonian governs the intricate behaviour of particles, such as electrons, as they interact with the external potential of other particles, notably positively charged nuclei. This intricate interplay is elegantly described within the Born-Oppenheimer approximation, as represented by the following profound equation:

\begin{align}
\mathbf{H} = -\sum_{i} \frac{\nabla^2_i}{2} - \sum_{i,j} \frac{q_j}{\vert R_j - r_i \vert} + \sum_{i<j} \frac{1}{r_i - r_j} + \sum_{i<j} \frac{q_i q_j}{R_i - R_j}.
\label{diff-Ham}
\end{align}

Here, $\mathbf{H}_1=-\sum_{i} \frac{\nabla^2_i}{2}$ is the kinetic energy term, $\mathbf{H}_2=- \sum_{i,j}\frac{q_j}{\vert R_j -r_i \vert}$ denotes the potential energy where $q_j$ are charges of the nuclei, $R_j$ and $r_i$ are positions of the nuclei and electrons respectively; the term $\mathbf{H}_3=\sum_{i<j}\frac{1}{r_i-r_j}$ denotes the electron-electron repulsion potential term, and $\mathbf{H}_4=\sum_{i<j}\frac{q_iq_j}{R_i-R_j}$ is a constant term. We use the discretization techniques to make the differential form (\ref{diff-Ham}) implementable on a digital computer \cite{babbush2017low,kivlichan2017bounding}, and the effective Hamiltonian is simplified as
\begin{align}
    \mathbf{H} = \mathbf{K}(\hat{p}) + \mathbf{V}(\hat{x}),
\end{align}
where $\mathbf{K}(\hat{p})$ represents the discretized kinetic energy operator as a function of momentum ($\hat{p}$), and the term $\mathbf{V}(\hat{x})$ denotes the potential energy expressed in space coordinates ($\hat{x}$). Note that, we assume all energy terms (except the kinetic energy part) to be included with the potential energy operator, given by 
\begin{align}
\mathbf{V} = -\sum_{i,j} \frac{q_j}{\vert R_j - r_i \vert} + \sum_{i<j} \frac{1}{r_i - r_j} + \sum_{i<j} \frac{q_i q_j}{R_i - R_j}.
\label{diff-Ham}
\end{align}

The time-dependent Schrodinger's equation (TDSE) is given by equation (\ref{eqn:TDSE}).
\begin{equation}
    i\hbar \frac{\partial}{\partial t}\ket{\Psi (t)} = \hat{H}\ket{\Psi(t)} \label{eqn:TDSE}
\end{equation}
The Hamiltonian $\hat{H}$ consists of the potential energy and the kinetic energy part. The time evolution due to $\hat{H}$ is given by \ref{eqn:time_evol} \cite{Townsend}.
\begin{equation}
    \ket{\Psi (t)} = e^{-i\hat{H}t/\hbar}\ket{\Psi(0)} \label{eqn:time_evol}
\end{equation}
The potential energy operates in the spatial dimensions. The state $\ket{\Psi(t)}$ is in the form of a vector with each element representing the probability amplitude of the wave function at a given point in a 1D/2D lattice. Let us consider a 1D lattice, where every lattice point is represented by a state. If we just consider the potential alone, and write $\hat{H} = \mathbf{V}(\hat{x})$, the state evolves due to the action of  $e^{-i\mathbf{V}(\hat{x})t/\hbar}$ on the state vector $\ket{\Psi(t)}$. If the potential function is well-defined (continuous, square-integrable) for every point in the lattice, it can be represented in the form of a diagonal matrix.  The exponential of such a matrix (as used in (\ref{eqn:time_evol})) is also a diagonal matrix. If we consider all the equations in atomic units and ignore constants like $\hbar$, we get the modified equation (\ref{eqn:time_evol_pot}).
\begin{equation}
    \ket{\Psi (t)} = e^{-i\mathbf{V}(\hat{x})t}\ket{\Psi(0)} \label{eqn:time_evol_pot}
\end{equation}
The matrix $e^{-i\mathbf{V}(\hat{x})t}$ in our analysis will be considered a diagonal matrix.

\begin{figure}[htb!]
    \centering
    \includegraphics[width=0.49\linewidth]{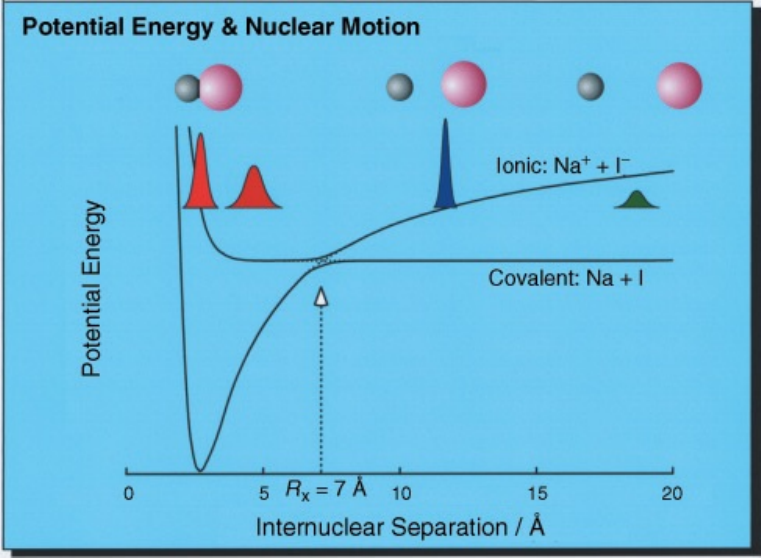}
    \caption{Potential energy curves in Sodium Iodide molecule \cite{baskin2001freezing}}
    \label{NAI_potential}
\end{figure}

In Fig.\ref{NAI_potential}, the potential energy distribution of Sodium Iodide (NaI) as function of atomic distance is shown \cite{baskin2001freezing}.

\begin{figure}[htb!]
\centering 
\includegraphics[width=0.49\linewidth]{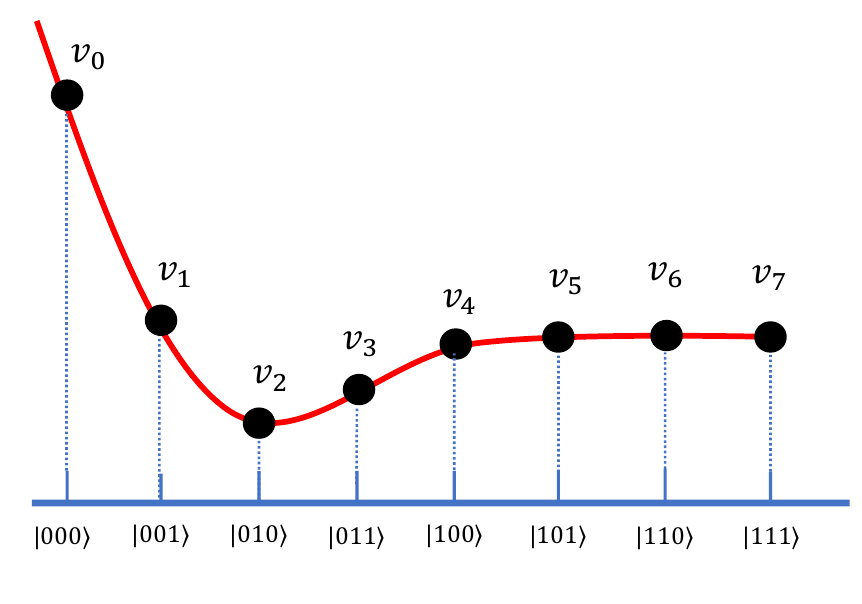} 
\caption{Figure showing the potential curve over the 1D lattice and the corresponding diagonal potential matrix}
\label{fig:potential_states}
\end{figure}

In our quantum circuit, the encoding method we have used is amplitude encoding, where every point in the lattice is a state. The states run from $\ket{0}^{\otimes n}$ to $\ket{1}^{\otimes n}$ for an $n$ qubit system. The leftmost bit corresponds to the most significant qubit (msqb), while the rightmost bit corresponds to the least significant qubit (lsqb). If we consider a 1D lattice, we can think of the layout as given in Fig. \ref{fig:potential_states} for a 3 qubit system. The plot on the left gives the potential curve as experienced over the lattice. The lattice points are marked by the states. The corresponding diagonal matrix is given by
\begin{align}
    \mathbf{V}(\hat{x})=\begin{bmatrix}
        v_0&&&&&&&\\
        &v_1&&&&&&\\
        &&v_2&&&&&\\
        &&&v_3&&&&\\
        &&&&v_4&&&\\
        &&&&&v_5&&\\
        &&&&&&v_6&\\
        &&&&&&&v_7
    \end{bmatrix}
    \label{diagpotential}
\end{align}
where $v_0,v_1,~\dots, v_7$ are the discretized potential energies.

\section{Proposed Quantum Encoding Methods} \label{sec:proposed_operator}

\subsection{Encoding in Hadamard basis}\label{hadamardbasis_theory}

To encode our potential function into basis functions that can be encoded in a quantum circuit, we use the Hadamard transform or the Walsh-Hadamard transform \cite{Mike_Ike}. The matrix generating the Hadamard functions can be obtained by taking the tensor products of Hadamard matrices, $H^{\otimes n}$. For example, if we take the case of 3 qubits, the Hadamard functions are given by the columns (or the rows, as the matrix is symmetric) of the matrix given in (\ref{eqn:Hadamard_func}). As depicted in the equation, the basis functions are $b_1 = \frac{1}{\sqrt{8}}\left[1, 1, 1, 1, 1, 1, 1, 1 \right]^T, b_2 = \frac{1}{\sqrt{8}}\left[1, -1, 1, -1, 1, -1, 1, -1 \right]^T$, and so on.
\begin{equation}
    H^{\otimes 3} = \frac{1}{\sqrt{8}}\left[
\begin{array}{rrrrrrrr}
b_1 & b_2 & b_3 & b_4 & b_5 & b_6 & b_7 & b_8 \\
1 &  1 &  1 &  1 &  1 &  1 &  1 &  1 \\
1 & -1 &  1 & -1 &  1 & -1 &  1 & -1 \\
1 &  1 & -1 & -1 &  1 &  1 & -1 & -1 \\
1 & -1 & -1 &  1 &  1 & -1 & -1 &  1 \\
1 &  1 &  1 &  1 & -1 & -1 & -1 & -1 \\
1 & -1 &  1 & -1 & -1 &  1 & -1 &  1 \\
1 &  1 & -1 & -1 & -1 & -1 &  1 &  1 \\
1 & -1 & -1 &  1 & -1 &  1 &  1 & -1 \\
\end{array} \right]
\label{eqn:Hadamard_func}
\end{equation}
The Hadamard basis functions have rectangular shapes and for approximating any given function defined over an $n$ qubit system, one would need $2^n$ such basis functions. The factor $\frac{1}{\sqrt{2^n}}$ ensures that the bases are orthonormal and the matrix unitary. The entire potential function over the lattice points can be represented as a linear combination of the bases as given in (\ref{lin_comb_had}).
\begin{equation}
    f(x) = \sum_{j=1}^{2^n} c_j b_j(x) \label{lin_comb_had}
\end{equation}
Here $x$ denotes the lattice points. The coefficient $c_j \in \mathbb{R}$ can be obtained by taking the inner product of the function $f(x)$ with the basis $b_j(x)$.
\begin{equation}
    c_j = b_j(x)^T.f(x)
\end{equation}

We show the reconstruction of a function given by $f(x) = e^{-(x-1)}, ~x\in \left[0,10\right]$, using the right-hand side (RHS) of equation (\ref{lin_comb_had}). The function is first plotted for $100$ points and then sampled at equally spaced $2^5$ points (5 qubit system). The function is then reconstructed using the with Hadamard basis functions and the sampled $2^5$ points. Fig.\ref{fig:basis_reconst} gives the actual as well as the reconstructed curves. 

\begin{figure}[htb!]
\centering 
\includegraphics[width=0.49\linewidth]{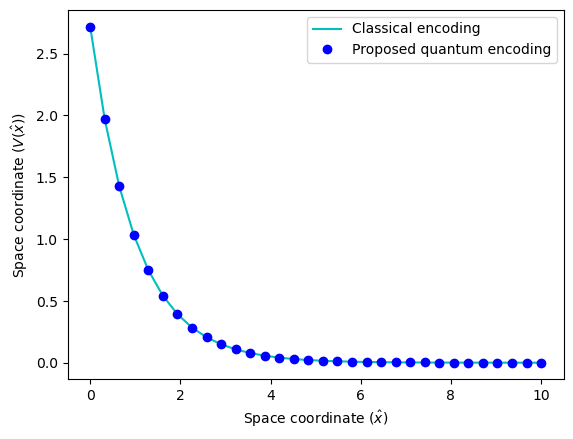} 
\caption{Actual curve and the reconstructed curve using sampled points}
\label{fig:basis_reconst}
\end{figure}
As can be seen from the figure, a perfect reconstruction is possible if we have $2^n$ basis functions.


\subsection{Quantum circuit for Hadamard basis encoding}

In our quantum circuit, we have to implement the operation given in (\ref{eqn:time_evol_pot}), i.e., construct the unitary operation given by $e^{-i\mathbf{V}(\hat{x})t}$. Now, the potential is encoded in the diagonal matrix. In order to reconstruct the diagonal matrix we need to have the basis functions in the diagonals. The $\mathbf{V}(\hat{x})$ matrix would then be a linear combination of matrices that have the Hadamard basis functions in the diagonals as shown in (\ref{eqn:diag_lin_comb}).
\begin{equation}
    \mathbf{V}(\hat{x}) = \sum_{j=1}^{2^n} c_jB_j \label{eqn:diag_lin_comb}
\end{equation}
For example, $B_1$ will be a matrix that has $b_1$ in the diagonals. The matrix $e^{-i\mathbf{V}(\hat{x})t}$ can then be expressed as follows.
\begin{equation}
    e^{-i\mathbf{V}(\hat{x})t} = e^{-it\sum c_jB_j} \label{eqn:exp_lin_comb}
\end{equation}
In (\ref{eqn:exp_lin_comb}), the index is $j$ and $i$ is the imaginary number $\sqrt{-1}$. Since diagonal matrices commute, the expression in (\ref{eqn:exp_lin_comb}) can be written as follows. In (\ref{eqn:exp_prod_comb}), $\theta _j = c_jt$.
\begin{equation}
    e^{-i\mathbf{V}(\hat{x})t} = \prod_{i=1}^{2^n}  e^{-it c_jB_j} = \prod_{i=1}^{2^n}  e^{-i\theta _jB_j} \label{eqn:exp_prod_comb}
\end{equation}

\subsection*{Expressing the basis diagonals using Pauli $Z$ and $I$ matrices}

It can be shown that the Hadamard basis functions can be put in diagonal matrices by using tensor product combinations of Pauli $Z$ and $I$ matrices. 
\begin{equation*}
    Z = \left[\begin{array}{rr}
        1 & 0 \\
        0 & -1
    \end{array}\right], ~~~~ I = \left[\begin{array}{rr}
        1 & 0 \\
        0 & 1
    \end{array}\right]
\end{equation*}

The combination of $Z$ and $I$ follows the binary number progression ($000...$ to $111...$, all zeros to all ones) with $I$ corresponding to $0$ and $Z$ corresponding to $1$. The leftmost column or the leftmost basis of the Hadamard functions matrix (refer to the matrix given in (\ref{eqn:Hadamard_func})) corresponds to all zeros and the rightmost column corresponds to all ones. Table \ref{table:Hadamard-IZ_comb} illustrates the idea for a $3$-qubit system.
\begin{table}[!h]
\caption{Hadamard basis matrices and Pauli $Z$ and $I$ equivalence}
\vspace{-0.4cm}
\begin{center}
\begin{tabular}{|c|c|c|c|}
 \hline
 Sl. No. & Basis matrix & Binary exp & $I$ and $Z$ combination \\
 \hline
 1 & $B_1$ & $000$ & $I\otimes I\otimes I$ \\
 2 & $B_2$ & $001$ & $I\otimes I\otimes Z$ \\
 3 & $B_3$ & $010$ & $I\otimes Z\otimes I$ \\
 4 & $B_4$ & $011$ & $I\otimes Z\otimes Z$ \\
 5 & $B_5$ & $100$ & $Z\otimes I\otimes I$ \\
 6 & $B_6$ & $101$ & $Z\otimes I\otimes Z$ \\
 7 & $B_7$ & $110$ & $Z\otimes Z\otimes I$ \\
 8 & $B_8$ & $111$ & $Z\otimes Z\otimes Z$ \\
 \hline
\end{tabular}\label{table:Hadamard-IZ_comb}
\end{center}
\end{table} 
$B_1$ is the diagonal matrix having the basis $b_1$ (the left most basis) in the diagonal. It corresponds to the binary number $000$. This matrix can then be constructed by taking the tensor product $I\otimes I\otimes I$. Similarly, for the rest, with the final basis matrix $B_8$ having the basis $b_8$ in the diagonal. This corresponds to $111$ in the binary number progression. This matrix can be created by doing the tensor product $Z\otimes Z\otimes Z$.

\subsection*{Construction of the quantum circuit with Hadamard encoding}

We have seen in (\ref{eqn:exp_prod_comb}) how $e^{-i\mathbf{V}(\hat{x})t}$ can be expressed as a product of unitaries,  $e^{-i\theta _jB_j}$, with $j=\{1, 2, \hdots, ~ 2^n\}$. Now, $B_j$ can be expressed in the form of $Z$ and $I$ matrices as given in Table \ref{table:Hadamard-IZ_comb}, scaled by a factor of $\frac{1}{\sqrt{N}}$. Expressions of the type $e^{-i\theta (I\otimes I\otimes Z)}$,  $e^{-i\theta (I\otimes Z\otimes I)}$ or $e^{-i\theta (Z\otimes I\otimes I)}$ can be implemented in a quantum circuit by simply having the gate $R_z (2\theta)$ in the qubit location corresponding to where $Z$ is located. For example, for $e^{-i\theta (Z\otimes I\otimes I)}$, we will need $R_z (2\theta)$ in the msqb. Expressions of the type $e^{-i\theta (I\otimes Z\otimes Z)}$ or $e^{-i\theta (Z\otimes Z\otimes I)}$ require entanglement gates and can be implemented using CNOT gates as shown in Fig. \ref{fig:z1xz2_imp} \cite{Mike_Ike}.

\begin{figure}[htb!]
\centering 
\includegraphics[width=0.49\linewidth]{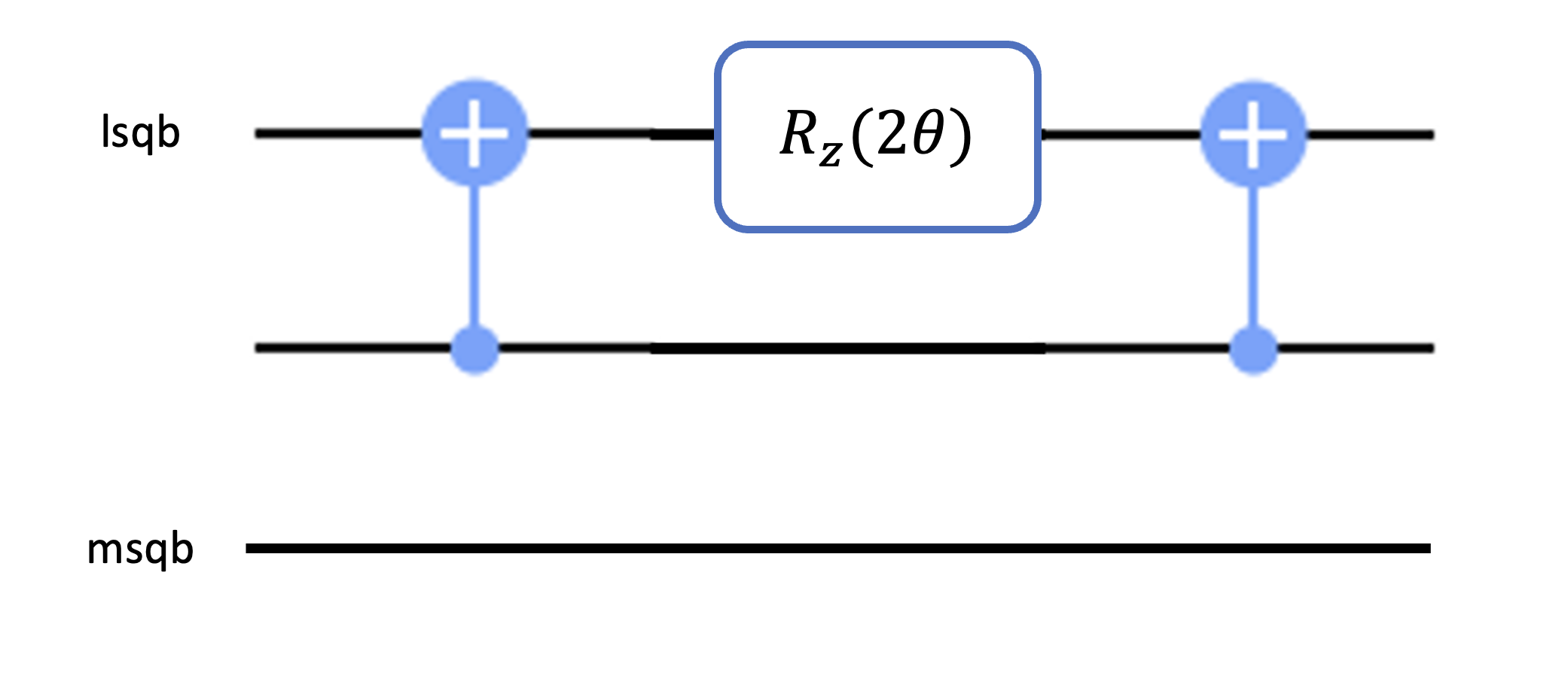} 
\caption{Implementation of $e^{-i\theta (I\otimes Z\otimes Z)}$}
\label{fig:z1xz2_imp}
\end{figure}
Similarly, the expression $e^{-i\theta (Z\otimes Z\otimes Z)}$ can be implemented using CNOT and $R_z$ gates as shown in Fig. \ref{fig:z1xz2xz3_imp}.
\begin{figure}[htb!]
\centering 
\includegraphics[width=0.49\linewidth]{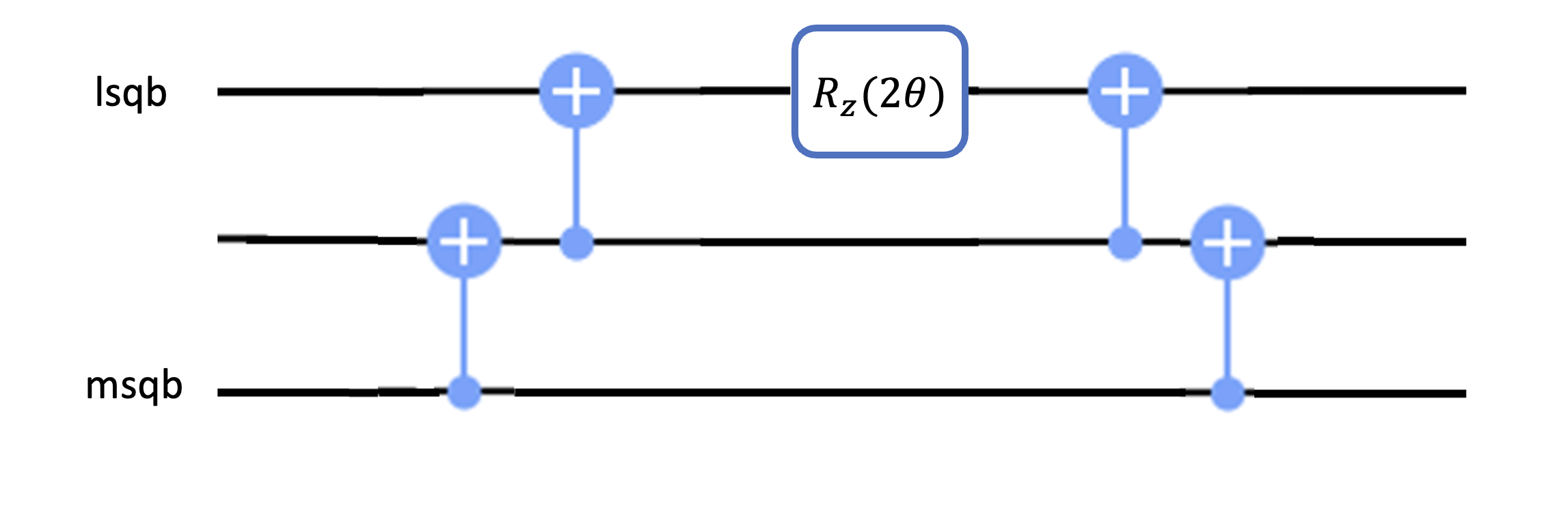} 
\caption{Implementation of $e^{-i\theta (Z\otimes Z\otimes Z)}$}
\label{fig:z1xz2xz3_imp}
\end{figure}
The expression $e^{-i\theta (I\otimes I\otimes I)}$ will result in a global phase of phase angle $-\theta$.

\subsection{Quantum polynomial approximate encoding}

We propose another Hamiltonian encoding procedure using a parameterized approach for an approximate time evolution with reduced computational complexity. In this procedure, we perform a polynomial fitting of the actual potential energy curve up to a desired accuracy ($\delta$) as follows
\begin{align}
    \underset{x,r}{\min}~\Vert e^{-i\mathbf{V}(\hat{x})t} - g(x,r) \Vert \leq \delta, \label{eqn:fit_tol}
\end{align}
where $g(x,r)$ is a polynomial in $x$ of degree $r$. 
To perfectly encode a function of degree $r$, using single and multi qubit phase gates in a quantum circuit, the total number of gates required will be given by the following expression.
\begin{align}
    N_g=^nC_0 + ^nC_1 + ^nC_2 + \dots + ^nC_r.
    \label{polyapprox}
\end{align}
In (\ref{polyapprox}), we have an $n$ qubit system and the number of possible multi-qubit phase gates involving $r$ qubits will be $^nC_r$ (for e.g., $^nC_1$ - single qubit gates, $^nC_2$ - two-qubit gates, $^nC_3$ - three-qubit gates, and so on). If we are to perfectly encode a function of degree $m$, the quantum system we will need is $n\geq m$ qubits. If we have $n=m$, the number of gates we will have in that case will be as follows.
\begin{align}
    N_g=^nC_0 + ^nC_1 + ^nC_2 + \dots + ^nC_n = 2^n.
    \label{eqn:polyapprox_exact}
\end{align}
The gate requirements are in agreement with the encoding strategy described by Grover in \cite{Grover_encoding} for creating integrable probability distributions.

Using this procedure, the constant term corresponding to $^nC_0$ can be encoded in the global phase. The linear coefficients corresponding to the $^nC_1$ terms require $n$ number of single-qubit gates, and the quadratic coefficients corresponding to the $^nC_2$ terms require two-qubit entangled gates of $^nC_2$ combinations, and so on till $^nC_r$ for $r^{th}$-order polynomial fitting. However, to reduce the complexity of the circuit, we can approximate the potential curve to a second or third-order polynomial and can choose single and two-qubit gates (and $3$ qubit entangled gates if required) to design a quantum circuit which can reconstruct the potential curve to a desired accuracy as given in (\ref{eqn:fit_tol}). The gate parameters may be estimated to do any of the two following tasks.
\begin{enumerate}
    \item Perfectly encode $K=^nC_0+ ^nC_1+ ^nC_2$ functional values with $K$ parameters and not worry about the remaining $2^n - K$ functional values. 
    \item Give a least squares solution for all the $2^n$ functional values with the above $K$ parameters, albeit, with some errors in all of them.
\end{enumerate}
We take the example of a least squares solution in a $3$ qubit quantum circuit, where a potential operator can be constructed. By placing $\theta_0$ as the global phase, phase gates ($P(\theta)$) with angles $\theta_1,\theta_2,\theta_3$ (respectively on the first, second and third qubits) and $2$-qubit controlled phase gates ($Cp(\theta)$) with phases $\theta_{12},\theta_{13},\theta_{23}$ (between qubit first and second, first and third, and second and third qubits, respectively) in the quantum registers, we will get a unitary operation that corresponds to a diagonal matrix $\tilde{\mathbf{U}}(\theta) \in \mathcal{C}^{8\times 8}$. $\tilde{\mathbf{U}}(\theta)$ should be a close approximation of the matrix exponential $e^{-i\mathbf{V}(\hat{x})t}$, with the matrix $\mathbf{V}(\hat{x})$ being a diagonal matrix with  elements $\mathbf{v}=[v_0,\dots, v_7]$ in its diagonal (as explained in in section \ref{sec:PE operator}). Here, the first qubit refers to the lsqb and the third qubit refers to the msqb.

We have already seen in Fig. \ref{fig:potential_states} and in Fig. \ref{NAI_potential} that the diagonal elements correspond to a particular state (lattice point), as $v_0$ corresponds to the state $\ket{000}$, $v_1$ to $\ket{001}$, etc. Let us now consider the following points.
\begin{itemize}
    \item Every time we add a phase gate, $P(\theta)$, to a qubit, all states that have that qubit in state $\ket{1}$ get affected by an imaginary exponential of that phase parameter. For example, the phase gate parameter $\theta_1$ will affect all the states where the first qubit (lsqb) is in state $\ket{1}$ - $e^{i\theta_1}\ket{001}, e^{i\theta_1}\ket{011}, e^{i\theta_1}\ket{101}$ and $e^{i\theta_1}\ket{111}$.

    \item Every time we add a controlled phase gate or a two-qubit phase gate, $Cp(\theta)$, all states that have those qubits in state $\ket{1}$ get affected. For example, the phase gate $\theta_{12}$ will affect all the states where the first qubit and the second qubit are in state $\ket{1}$ - $e^{i\theta_{12}}\ket{011}$ and $e^{i\theta_{12}}\ket{111}$.
\end{itemize}
Putting all the $P(\theta)$ and $Cp(\theta)$ gates together along with the global phase $\theta_0$, if we compare the exponential in the diagonal matrix $\tilde{\mathbf{U}}(\theta)$ and the matrix exponential $e^{-i\mathbf{V}(\hat{x})} (t=1)$, we should get the following equations in matrix form with $K$ ($^3C_0+ ^3C_1+ ^3C_2 = 7$) parameters.
\begin{align}
    \begin{bmatrix}
  1&  0&  0&  0&  0&  0&  0\\
  1&  1&  0&  0&  0&  0&  0\\
  1&  0&  1&  0&  0&  0&  0\\
  1&  1&  1&  0&  1&  0&  0\\
  1&  0&  0&  1&  0&  0&  0\\
  1&  1&  0&  1&  0&  1&  0\\
  1&  0&  1&  1&  0&  0&  1\\
  1&  1&  1&  1&  1&  1&  1
    \end{bmatrix}\begin{bmatrix}
        \theta_0\\ \theta_1\\ \theta_2 \\ \theta_3 \\ \theta_{12} \\ \theta_{13} \\ \theta_{23}
    \end{bmatrix} = \begin{bmatrix}
        v_0\\ v_1\\ v_2 \\ v_3 \\ v_4 \\ v_5 \\ v_6 \\ v_7
    \end{bmatrix},
    \label{ls3eqn}
\end{align}
Equation (\ref{ls3eqn}) has the form $\mathbf{A} \boldsymbol{\xi} = \mathbf{v}$, with $\mathbf{A}\in \mathbb{R}^{8\times 7}$ being the matrix, $\boldsymbol{\xi}\in \mathbb{R}^{7\times 1}$ denotes the parameter vector, and $\mathbf{v}\in \mathbb{R}^{8\times 1}$ represents the vector with functional values corresponding to the potential energy curve. 

More generally, with a finite number of $n$ qubits, the dimension of $\mathbf{A}$ will be $N\times K$ and $\mathbf{v}\in \mathbb{R}^{N}$ for $K=^nC_0+ ^nC_1 + ^nC_2+\dots + ^nC_r$ parameters (for $r^{th}$ order polynomial approximation).

One can perform a least square estimate of the parameter vector $\boldsymbol{\xi}$ as
\begin{align}
    \underset{\theta}{\min} ~\Vert \mathbf{A}\boldsymbol{\xi} - \mathbf{v} \Vert_2 
    \label{lsmin}
\end{align}
where, $\boldsymbol{\xi} \in \mathbb{R}^K$.
The least square solution for (\ref{lsmin}) can obtained as $\hat{\boldsymbol{\xi}}=\mathbf{A}^{+} \mathbf{v}$, where $\mathbf{A}^{+}$ is the pseudo-inverse of $\mathbf{A}$. Note that, while solving for the least square problem, one can rearrange the columns and rows of the matrix $\mathbf{A}$ (and accordingly in the equation (\ref{ls3eqn})) to make it in a lower triangular matrix form. For example, (\ref{ls3eqn}) can be written in the lower-triangular form as follows
\begin{align}
    \begin{bmatrix}
  1&  0&  0&  0&  0&  0&  0\\
  1&  1&  0&  0&  0&  0&  0\\
  1&  0&  1&  0&  0&  0&  0\\
  1&  0&  0&  1&  0&  0&  0\\
  1&  1&  1&  0&  1&  0&  0\\
  1&  1&  0&  1&  0&  1&  0\\
  1&  0&  1&  1&  0&  0&  1\\
  1&  1&  1&  1&  1&  1&  1
    \end{bmatrix}\begin{bmatrix}
        \theta_0\\ \theta_1\\ \theta_2 \\ \theta_3 \\ \theta_{12} \\ \theta_{13} \\ \theta_{23}
    \end{bmatrix} = \begin{bmatrix}
        v_0\\ v_1\\ v_2 \\ v_4 \\ v_3 \\ v_5 \\ v_6 \\ v_7
    \end{bmatrix},
    \label{ls3eqn1}
\end{align}
By doing this operation, parameters can be estimated efficiently.

\subsection{Circuit construction with polynomial approximation}

\begin{figure}[htb!]
\centering
\subfloat[One input phase gate (with phase $\theta_1$) to implement $\mathbf{I}\otimes \mathbf{I}\otimes e^{i\theta_1}$]{
 \includegraphics[width=0.33\linewidth]{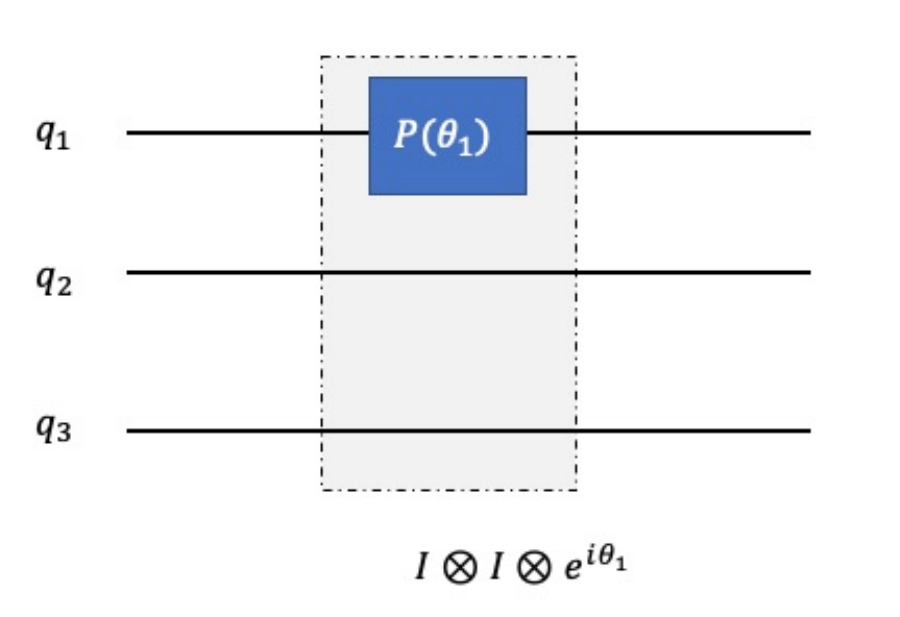} 
}
\subfloat[Two input controlled phase gate (with phase $\theta_{12}$) to implement $\mathbf{I}\otimes Cp $]{
  \includegraphics[width=0.33\linewidth]{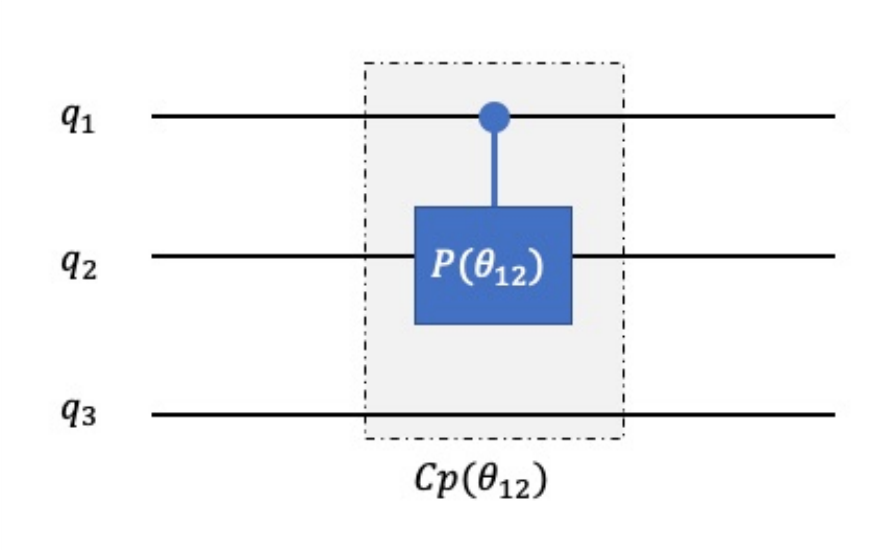} 
}
\subfloat[Three qubit gate to implement the entangled operator $CCp$ with phase $\theta_{123}$.]{
  \includegraphics[width=0.33\linewidth]{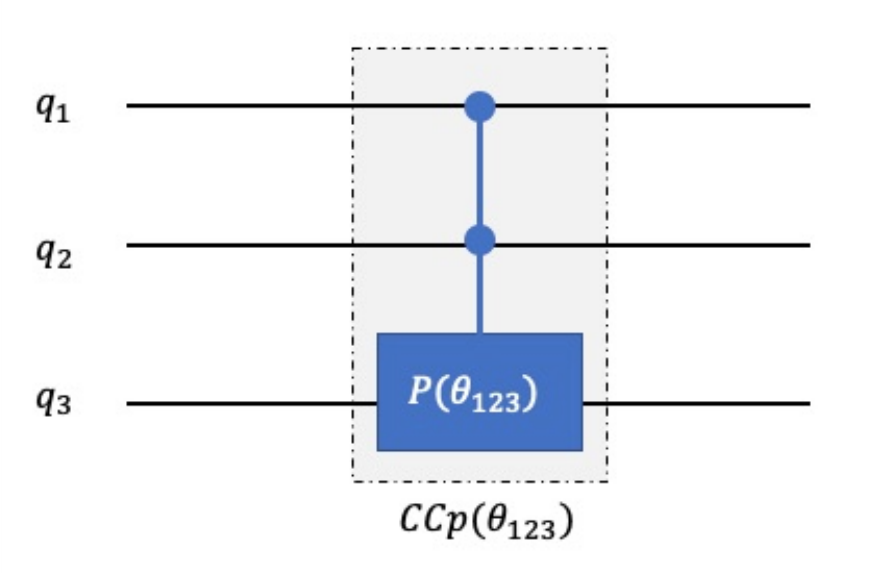} 
}
\caption{Circuit construction with polynomial approximation encoding in a $3$ qubit circuit: Here, the input qubits are denoted by $q_1,q_2$, and $q_3$}
\label{poly_approx_ckt}
\end{figure}
With polynomial approximation encoding, we seek to implement quantum circuits that can realize an approximate function (with a finite order) closer to the potential energy to be represented in the diagonal matrix form. We will be selecting the phase gates to prepare the diagonal matrix. Here, $1$-qubit phase gates can approximate the potential energy function to the first order. Similarly, two input-controlled phase gates ($Cp$) and three input gates ($CCp$) can be used to approximate up to $2^{nd}$, and $3^{rd}$ order polynomial of the potential energy function. Here, in Fig.\ref{poly_approx_ckt}, we have shown the circuit implementation with different gates ($1$-input, $2$-input, and $3$-input gates) to prepare a diagonal matrix with desired phase angles. The phase and entangled phase gates used in this figure are defined as follows
\begin{align}
    P(\theta_1)&=\begin{bmatrix}
        1&0\\0&e^{i\theta_1}
    \end{bmatrix},\\
    Cp(\theta_{12})&= \mathbf{I}\otimes \ket{0}\bra{0}~ + ~P(\theta_{12})\otimes\ket{1}\bra{1}\nonumber\\
    &= \begin{bmatrix}
        1&0&0&0\\
        0&1&0&0\\
        0&0&1&0\\
        0&0&0&e^{i\theta_{12}}
    \end{bmatrix},\\
    CCp(\theta_{123})&= \mathbf{I}\otimes \mathbf{I}\otimes \ket{0}\bra{0}~ + ~Cp(\theta_{123})\otimes\ket{1}\bra{1}\nonumber\\
    &=\begin{bmatrix}
        1&0&0&0&0&0&0&0\\
        0&1&0&0&0&0&0&0\\
        0&0&1&0&0&0&0&0\\
        0&0&0&1&0&0&0&0\\
        0&0&0&0&1&0&0&0\\
        0&0&0&0&0&1&0&0\\
        0&0&0&0&0&0&1&0\\
        0&0&0&0&0&0&0&e^{\theta_{123}}
    \end{bmatrix}\label{ccp_eqn}.
\end{align}

As examples, we have shown that placing a phase gate with angle $\theta_1$ can be used to implement an operator $\mathbf{I}\otimes \mathbf{I} \otimes e^{i\theta_1}$ for a $3$ qubit quantum circuit shown in Fig.\ref{poly_approx_ckt}.$(a)$. The unitary operator obtained by placing a phase gate ($P(\theta_1)$) in the first qubit is given by
\begin{align}
    \mathbf{U}_1 = \begin{bmatrix}
        1&0&0&0&0&0&0&0\\
        0&e^{i\theta_1}&0&0&0&0&0&0\\
        0&0&1&0&0&0&0&0\\
        0&0&0&e^{i\theta_1}&0&0&0&0\\
        0&0&0&0&1&0&0&0\\
        0&0&0&0&0&e^{i\theta_1}&0&0\\
        0&0&0&0&0&0&1&0\\
        0&0&0&0&0&0&0&e^{i\theta_1}
    \end{bmatrix}.
\end{align}
The placing of this unitary operator will lead to phase operations $e^{i\theta_1}\ket{001}, e^{i\theta_1}\ket{011}, e^{i\theta_1}\ket{101}$ and $e^{i\theta_1}\ket{111}$, as explained in the previous section.
Similarly, one can get the operator $\mathbf{I}\otimes P(\theta_2)\otimes \mathbf{I}$ by placing a phase gate $P(\theta_2)$ with angle $\theta_2$ in the second register, and so on till the last qubit. We can place $n$ phase gates to an $n$-qubit circuit to approximate a first-order polynomial for the potential energy function. 

For the second-order approximation, we need $2$-qubit quantum gates. Here, we have chosen controlled phase gates ($Cp$) for the same. We can place a controlled phase gate between any two qubits in a $n$ qubit system to generate a unitary matrix. For a $n$ qubit quantum system, there are $^nC_2$ possible combinations to place different $Cp$ gates in the circuit. Similarly, we can use a $CCp$ gate to prepare a unitary of the form given in (\ref{ccp_eqn}). By placing phase gates (with angles $\theta_1, \theta_2, \theta_3$) and controlled phase gates (with angles $\theta_{12},\theta_{13},\theta_{23}$), we can get a linear combination of angles in the diagonal of the unitary matrix as shown in (\ref{ls3eqn}). Here, by solving the linear equations in the least square sense, we have obtained the estimated angles (parameters) to prepare the second-order polynomial of the potential energy function.

\section{Gate Complexity} \label{sec:gate_complexity}

In this research work, we have shown two different approaches for encoding the (discretized) potential energy function as a diagonal Hamiltonian operator. The complexity of this encoding method for the Hamiltonian simulation is measured in terms of gate complexity represented as a function of the number of qubits ($n$) as follows.

\subsection{Hadamard Encoding}

In the Hadamard encoding technique, CNOT and $Rz$ gates are used for the realization of the potential energy function. The number of gates required for simulating a potential operator $\mathbf{V}(\hat{x})\in \mathbb{R}^{2^n \times 2^n}$ using an $n$-qubit circuit is given in the below lemma.

\begin{lemma}
    A potential energy function can be encoded for the Hamiltonian simulation in an $n$ qubit circuit with $\sum_{r=2}^{n} {}^nC_2 2(r-1)$ number of quantum gates using the Hadamard basis encoding.
\end{lemma}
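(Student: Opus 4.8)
The plan is to count gates factor-by-factor in the product decomposition \eqref{eqn:exp_prod_comb}, grouping the $2^n$ factors according to how many Pauli-$Z$ matrices each one contains. First I would recall that $e^{-i\mathbf{V}(\hat{x})t}=\prod_{j=1}^{2^n} e^{-i\theta_j B_j}$, and that each $B_j$ is (up to the normalizing scalar $\tfrac{1}{\sqrt{N}}$) a tensor product of $Z$ and $I$ factors indexed by the binary string of $j-1$, as tabulated in Table~\ref{table:Hadamard-IZ_comb}. The natural complexity parameter is the \emph{weight} $r$ of a term, by which I mean the number of $Z$ factors appearing in that string. A weight-$r$ string is completely specified by choosing which $r$ of the $n$ qubits carry a $Z$, so the number of terms of weight exactly $r$ is ${}^nC_r$. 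This observation organizes the whole count.

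Next I would establish the gate cost of a single weight-$r$ factor $e^{-i\theta(Z_{i_1}\otimes\cdots\otimes Z_{i_r})}$, with identities on the remaining qubits. The standard realization is a CNOT parity ladder: $r-1$ CNOTs accumulate the parity of the $r$ active qubits onto a single target, one $R_z(2\theta)$ applies the phase, and $r-1$ CNOTs uncompute the ladder, for a total of $2(r-1)$ two-qubit gates and one single-qubit rotation. The cases $r=2$ and $r=3$ are precisely Fig.~\ref{fig:z1xz2_imp} and Fig.~\ref{fig:z1xz2xz3_imp}, and I would promote these to general $r$ by a short induction: introducing one further $Z$ factor lengthens the ladder by exactly one CNOT on the compute side and one on the uncompute side, adding $2$ entangling gates and preserving the $2(r-1)$ pattern. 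The two degenerate weights are handled separately: the weight-$0$ term $I^{\otimes n}$ contributes only a global phase and costs no gate, while each weight-$1$ term is a lone $R_z$ requiring no entangling gate.

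Finally I would sum the entangling-gate cost over all factors. Only terms of weight $r\ge 2$ require CNOTs, and there are ${}^nC_r$ such terms, each costing $2(r-1)$ entangling gates, so the total is $\sum_{r=2}^{n} {}^nC_r\,2(r-1)$, matching the claimed expression (with the summand read as the number ${}^nC_r$ of weight-$r$ strings times the per-term cost $2(r-1)$). The step I expect to be the crux is the induction establishing $2(r-1)$ CNOTs for arbitrary $r$, since the figures only exhibit $r\le 3$; the argument must verify that the ladder correctly implements the parity phase for every $r$ and must make explicit that the stated sum counts the two-qubit entangling gates, the one $R_z$ rotation per nontrivial term being accounted for separately rather than folded into this formula.
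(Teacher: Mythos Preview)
Your proposal is correct and follows essentially the same argument as the paper: group the $2^n$ Pauli strings by their $Z$-weight $r$, observe that there are ${}^nC_r$ terms of each weight, and that each weight-$r$ factor is implemented with $2(r-1)$ CNOTs via the ladder construction of Figs.~\ref{fig:z1xz2_imp}--\ref{fig:z1xz2xz3_imp}. You are in fact more careful than the paper in making explicit both the induction extending the $r\le 3$ pictures to arbitrary $r$ and the point that the displayed sum counts only the entangling gates (the $2^n-1$ single-qubit $R_z$ rotations being tallied separately).
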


\begin{proof}
We need in total $2^n$ operations. If we discount the global phase, that would result in $2^{n} - 1$ phase gates (single qubit gates). If we look at Table \ref{table:Hadamard-IZ_comb}, we will see that there are $^nC_1$ number of tensor products having one Pauli $Z$, $^nC_2$ number of tensor products having two Pauli $Z$s, and so on with $^nC_r$ number of tensor products having $r$ Pauli $Z$s. Every time we have $r$ Pauli $Z$ matrices in a tensor product, we will need to use $2(r-1)$ CNOT gates. The total number of CNOT gates required will be as given in (\ref{eqn:n_cnot}).
\begin{equation}
    n_{CNOT} = \sum_{r=2}^{n} {}^nC_r 2(r-1). \label{eqn:n_cnot}
\end{equation}
\end{proof}
For a $4$ qubit system, one can see from Fig. \ref{4qubitcircuit}.$(a)$ that there are $2^4 - 1 = 15$ phase (single qubit) gates. The number of CNOT gates required is given by
\begin{flalign*}
    n_{CNOT} &= {^4C_2}2(2-1) + {^4C_3}2(3-1) + {^4C_4}2(4-1) \\
             &=(6\times2\times1) + (4\times2\times2) + (1\times2\times3) = 34. 
\end{flalign*}

The Hadamard encoding technique can be used for simulating any arbitrary potential energy function with very high precision as it uses $2^n$ basis functions. However, the total count of CNOT gates is very high. \cite{welch2014efficient} gives a method using Gray code scheme to reduce the number of CNOT gates. One can also find other compaction techniques to do CNOT gate optimization \cite{maslov-ckt}. The approximate polynomial encoding technique reduces gate complexity with a trade-off against precision. The complexity versus precision trade-off is based on the order of the polynomial fitting.

\subsection{Approximate polynomial encoding}

Using the polynomial approximation of the actual potential energy function (with a reliable order $r$ of the polynomial), the gate complexity is presented in the below proposed lemma. 

\begin{lemma}
    The potential energy function can be encoded in an $n$-qubit circuit with the least square encoding method using at most $\Theta(^nC_r)$ gate complexity for a $r^{th}$-order polynomial approximation of the potential energy. With second-order approximation, we need $n+ ^nC_2$ number of quantum gates for the Hamiltonian simulation. 
\end{lemma}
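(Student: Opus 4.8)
The plan is to build directly on the gate-counting identity in equation (\ref{polyapprox}), which says that a perfect degree-$r$ encoding uses $N_g = {}^nC_0 + {}^nC_1 + \dots + {}^nC_r$ (controlled) phase gates, organized by arity so that the ${}^nC_k$ term counts the gates acting nontrivially on exactly $k$ qubits. First I would dispose of the exact second-order count, as it is a direct specialization. Setting $r=2$ gives $N_g = {}^nC_0 + {}^nC_1 + {}^nC_2 = 1 + n + {}^nC_2$. The ${}^nC_0 = 1$ term is the constant coefficient $\theta_0$, which is absorbed into a global phase and requires no physical gate (as noted in the construction following (\ref{eqn:polyapprox_exact})); discarding it leaves exactly $n$ single-qubit phase gates $P(\theta_i)$ together with ${}^nC_2$ two-qubit controlled-phase gates $Cp(\theta_{ij})$, for the claimed total of $n + {}^nC_2$.

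For the general $\Theta({}^nC_r)$ statement I would show that, in the relevant regime $r \le n/2$, the largest summand ${}^nC_r$ dominates the whole sum up to a constant factor. The lower bound is immediate, since $\sum_{k=0}^{r} {}^nC_k \ge {}^nC_r$. For the matching upper bound I would use the ratio of consecutive binomial coefficients, ${}^nC_{k-1}/{}^nC_k = k/(n-k+1)$, which for every $k \le r \le n/2$ is bounded by $\rho := r/(n-r+1) < 1$. Telescoping downward from $k=r$ then gives ${}^nC_{r-j} \le \rho^{\,j}\,{}^nC_r$, so that
\begin{equation*}
\sum_{k=0}^{r} {}^nC_k \;=\; \sum_{j=0}^{r} {}^nC_{r-j} \;\le\; {}^nC_r \sum_{j=0}^{\infty} \rho^{\,j} \;=\; \frac{{}^nC_r}{1-\rho}.
\end{equation*}
Since $1-\rho = (n-2r+1)/(n-r+1)$ is bounded away from $0$ whenever $r$ is held fixed (or more generally when $n-2r\to\infty$), both bounds are of order ${}^nC_r$, which establishes $N_g = \Theta({}^nC_r)$.

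The main obstacle I anticipate is not the calculation but pinning down the regime in which the $\Theta$ claim is genuinely valid. The collapse of the sum onto its top term fails once $r$ approaches $n$: at $r=n$ the full sum is $\sum_{k=0}^{n} {}^nC_k = 2^n$, whereas ${}^nC_n = 1$, so $\Theta({}^nC_r)$ is plainly false there. The proof must therefore restrict explicitly to low polynomial order, $r \le n/2$ (with the hidden constant degrading as $r \to n/2$), which is exactly the low-complexity, near-term regime the method is designed for. A secondary point to state carefully is the counting convention: the $\Theta({}^nC_r)$ bound treats each multi-qubit controlled-phase gate as a single logical gate (equivalently, as one fitted parameter $\theta$), and does not include the additional CNOT overhead incurred when such gates are compiled into elementary one- and two-qubit operations, in contrast to the explicit CNOT tally of the preceding lemma.
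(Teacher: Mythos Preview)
Your proposal is correct and follows essentially the same approach as the paper: both invoke the gate-count identity (\ref{polyapprox}), discard the ${}^nC_0$ global-phase term, and specialize to $r=2$ for the exact count $n+{}^nC_2$. Your treatment is in fact more rigorous than the paper's, which simply asserts that $\sum_{k=1}^{r}{}^nC_k \in \Theta({}^nC_r)$ without argument; your geometric-ratio bound supplies that missing justification and correctly flags that the estimate only holds for $r$ bounded away from $n$, a caveat the paper omits.
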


\begin{proof}
With the discretization process, we take $N$-sampled values of the potential energy function encoded in $n=log(N)$ qubits ($N$ is taken in the power of $2$). With our proposed least square method, we can take a polynomial approximation of the function to order $r$ as given in (\ref{polyapprox}). Starting the global phase with the term $^nC_0$, the other terms $^nC_r$ denote the combination of quantum gates with at most $r^th$-input quantum gate.  Hence, the combinations of the total gate counts in the quantum circuit is given by 
\begin{align}
    n_T=^nC_1 + ^nC_2 +\dots + ^nC_r,
\end{align}
which is in $\Theta(^nC_r)$. Restricting the approximation up to the order of $r=2$, we need $n$ phase gates, and $^nC_2$ number of controlled phase gates (at most) for the least square approximation. Hence, the complexity becomes $n+ {}^nC_2$. 
\end{proof}

Note that, the computational gate complexity using a second-order polynomial fitting to the potential energy curve is reduced to $n+ {}^nC_2 \approx \boldsymbol{\Theta}(n^2)$. This is a significant reduction of the computational resources in terms of the basic elementary quantum gates.





\section{Results} \label{sec:results}


In this result section, we demonstrate numerical simulation results for the time evolution operator design and potential energy reconstruction performed on IBM quantum simulators using qiskit \cite{qiskit_1},\cite{qiskit_2}. In the below subsections, we show the implementation of the potential energy operator for $3$ and $4$ qubit system as examples using proposed Hadamard basis encoding and polynomial approximate encoding methods. The performance of the proposed framework of the time evolution operator is measured with gate complexity as the key parameter index (KPI). In Table-\ref{table_param}, we have shown the parameters taken in the simulation environment.

\begin{table}[htb!]
    \centering
\begin{tabular}{|c|c|}
    \hline
    Quantum Simulator &  Unitary  Simulator, Qasm Simulator \\
    \hline
    IBMQ Machine & ibmq\_mumbai, ibm\_nairobi \cite{ibm-quantum}\\
    \hline
    Number of qubits ($n$) & $3-10$\\
    \hline
    Number of shots & $10000$\\
    \hline
    Range of space coordinate ($x$) & $[0, ~10]$\\
    \hline
    Sampling interval ($dx$) & $10/{2^n}$\\
    \hline
    Potential energy curve & $V(x)= a_1e^{-a_2(x-r_1)}$\\
    \hline
    Example & Potential energy of NaI\\
    \hline 
\end{tabular}
    \caption{Simulation parameters}
    \label{table_param}
\end{table}


\subsection{Encoding potential energy with $4$ qubit quantum circuit using Hadamard basis encoding and $2^{nd}$ order polynomial approximation}

Fig. \ref{4qubitcircuit}.$(a)$ gives the quantum circuit for a $4$ qubit system as implemented in qiskit. With $4$ qubit-based circuit we have used amplitude encoding to encode the sampled potential energy function $\mathbf{V}(\hat{x})\in \mathbb{R}^{16\times 1}$ with computational bases $\ket{0000}$ to $\ket{1111}$. The term with the basis $\mathbf{I}^{\otimes 4}$ is encoded here as the global phase. The number of basis states prepared with the combination of Pauli-$Z$ operators is $2^4=16$.

To encode with Hadamard basis as given in \ref{hadamardbasis_theory}, we have used the combination of $Rz$ gates and CNOT gates as shown. For encoding in Hadamard basis for $4$-qubits, one would need $15$ number of single qubit gates ($Rz$), and $34$ two qubit gates (CNOT). However, on further optimization the number of CNOT gates can be reduced \cite{maslov-ckt}. Fig. \ref{4qubitcircuit} gives a circuit which requires $30$ CNOT gates. The circuit we have provided here is a generic circuit upon which further circuit optimizations are possible as shown in \cite{welch2014efficient}. The number of CNOT gates given in (\ref{eqn:n_cnot}) can be considered as an upper bound on the number of two qubit gates required. The quantum Hadamard encoded unitary matrix $\mathbf{U}_{PE}$ approximates the function arbitrarily close to the classical function. We have performed a quantum simulation with evolution time $dt=1$, and potential energy function $\mathbf{V}(\hat{x})=e^{1-x}$ on qiskit unitary simulator, and the corresponding plot of the diagonal vector is given in Fig. \ref{construct_reconstruct}. 

\begin{figure}[htb!]
\centering
\subfloat[Circuit implementation in qiskit of a $4$ qubit system with Hadamard basis encoding]{
 \includegraphics[width=0.95\linewidth]{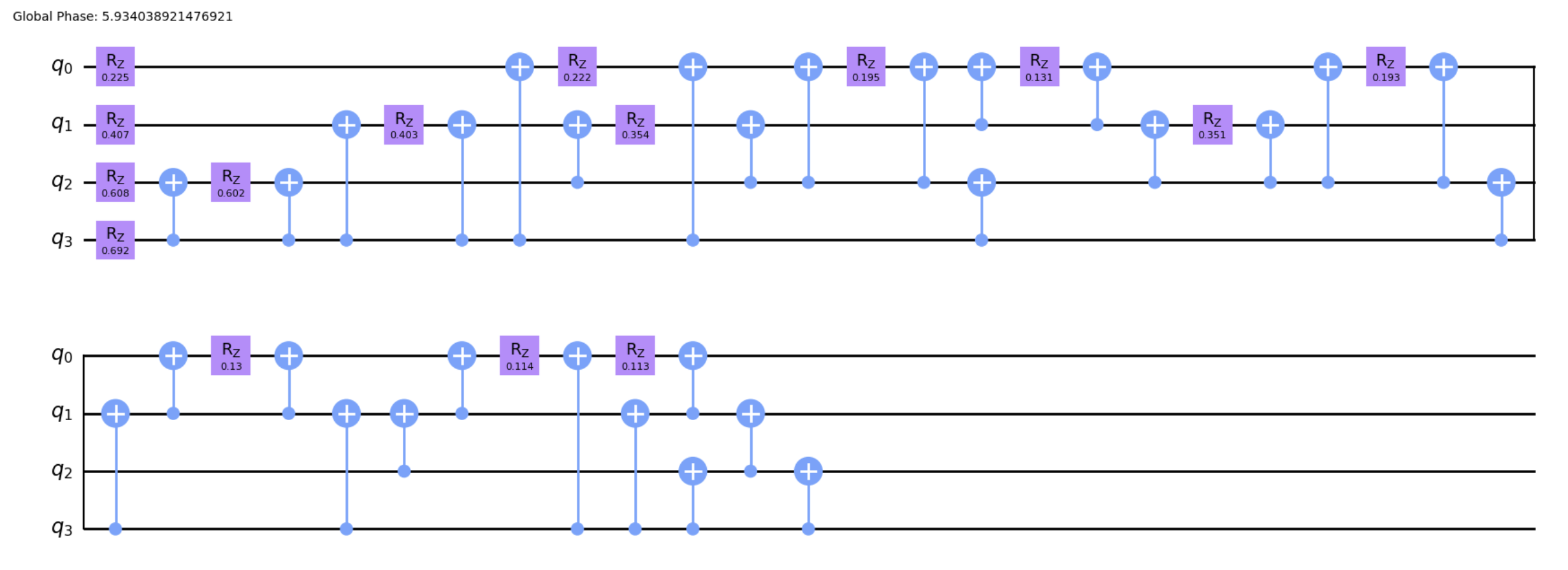} 
}
\hspace{0mm}
\subfloat[Circuit implementation in qiskit of a $4$ qubit system with $2^{nd}$ order polynomial approximation based least square encoding]{
\includegraphics[width=0.95\linewidth]{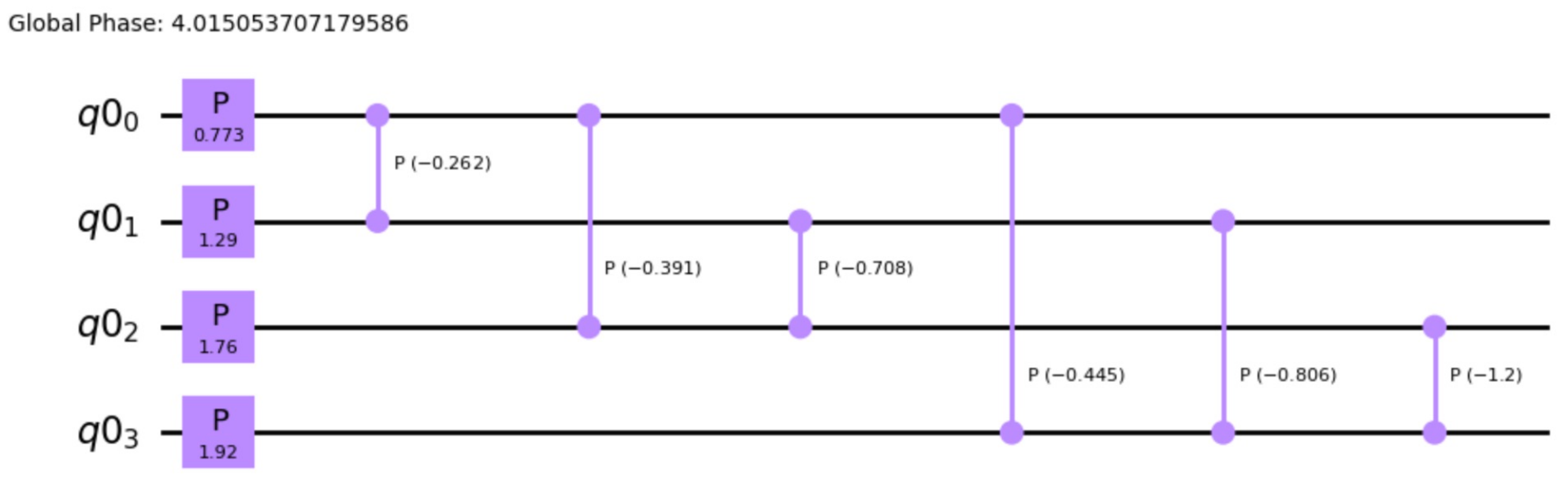} 
}
\caption{Time evolution operator for the potential energy encoding using $4$ qubit quantum circuit}
\label{4qubitcircuit}
\end{figure}

In Fig. \ref{4qubitcircuit}.$(b)$, we show a $4$-qubit quantum circuit with the $2^{nd}$-order approximate polynomial encoding. 
Here, we have used four $1$-qubit phase gates and six $ 2$-qubit controlled phase gates. Hence, the total number of quantum gates used is $10$. Here, the time evolution operator (corresponding to the potential energy), i.e., $e^{-i\mathbf{V}(\hat{x})t}$ is approximated with $2^{nd}$-order polynomial approximation function $g(x,2)$ with $10$ parameters (phases of the quantum gates). Starting the first potential energy sample (i.e, $v_0 \subset \mathbf{V}(\hat{x})$) encoded as a global phase, we have a total $11$ parameters for the least square estimate of the parameter vector $\hat{\boldsymbol{\xi}}$.  

Considering an exponentially decaying potential energy function of the form $\mathbf{V}(\hat{x})=e^{(1-x)}$ is simulated with Hadamard basis encoding and polynomial approximation encoding method as shown in Fig. \ref{construct_reconstruct}. In Fig. \ref{construct_reconstruct}.$(a)$, the plot of the diagonal unitary operator $U_{PE}=e^{-i\mathbf{V}t}$ is portrayed (imaginary values of the plot are shown here). The potential energy is discretized to $16$ samples ($v_0,\dots, v_{15}$) and encoded in $4$ qubit quantum circuit. The Hadamard basis encoding method constructs the unitary operator arbitrarily closer to the classically encoded values ($e^{-iv_0 t}, \dots, e^{iv_{15}t}$). The approximate encoding with the least square method (for $2^{nd}$ order polynomial approximation of the potential energy function) gives a closer approximation of the classically encoded potential energy function. 

In Fig. \ref{construct_reconstruct}.$(b)$, we have shown the reconstruction plot of the potential energy function ($V(\hat{x})$) obtained from the quantum simulated unitary operator. Here, we see a similar signature as seemed in the unitary construction. Here, the unitary construction of the potential energy using the least square encoding is slightly degraded as compared to the Hadamard basis encoding. However, one can simulate the polynomial approximation on a real quantum computer with fewer computational resources. As an example, for a $4$ qubit quantum circuit, we have used $10$ quantum gates for the $2^{nd}$ order polynomial approximation encoding of the potential energy function, whereas, Hadamard basis encoding takes $49$ quantum gates.

\begin{figure}[htb!]
\centering
\subfloat[Potential energy encoding in the Time evolution operator for $4$ qubit system and $2^{nd}$ order approximation: Imaginary part of $U_{PE}$ in vertical axis, discretized space ($x$) in horizontal axis]{
 \includegraphics[width=0.49\linewidth]{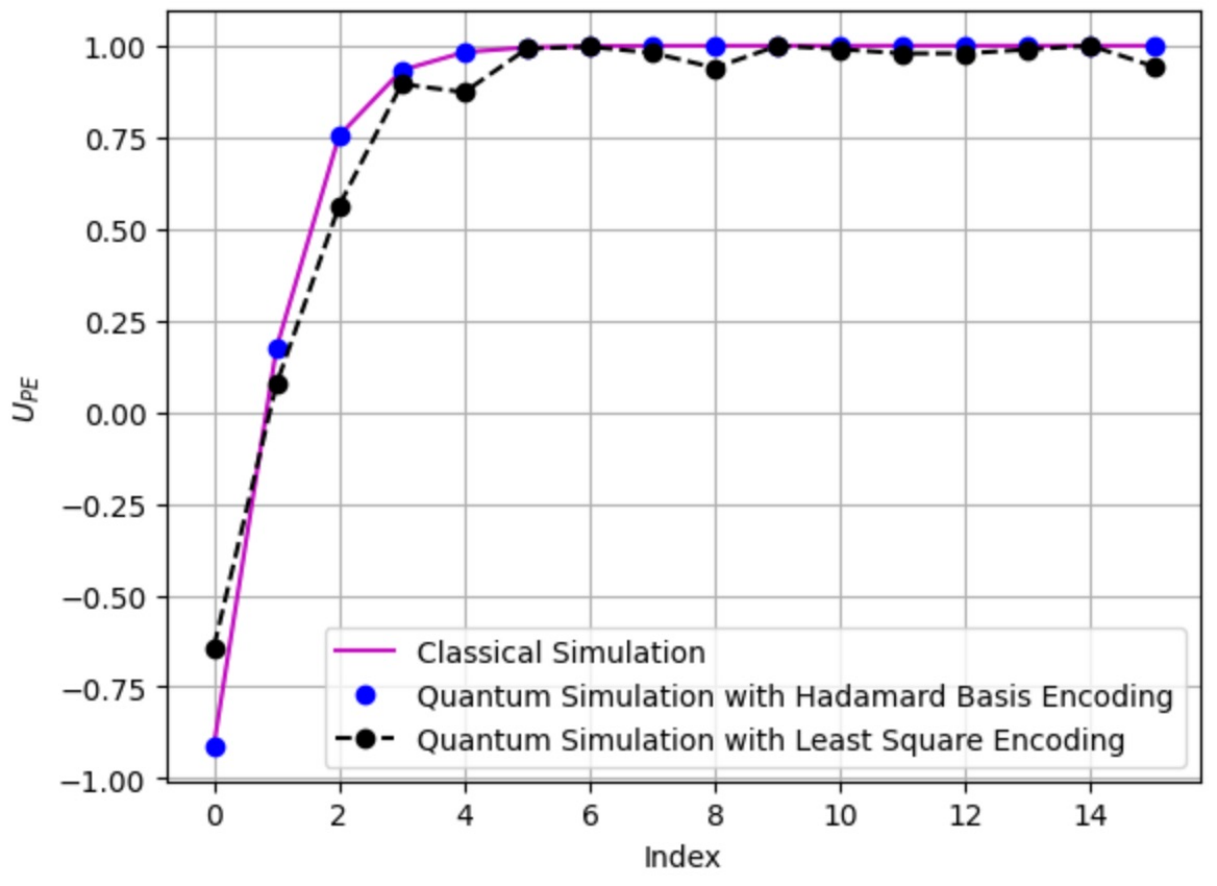} 
}
\subfloat[Potential energy reconstruction for $4$ qubit system and $2^{nd}$ order approximation]{
  \includegraphics[width=0.49\linewidth]{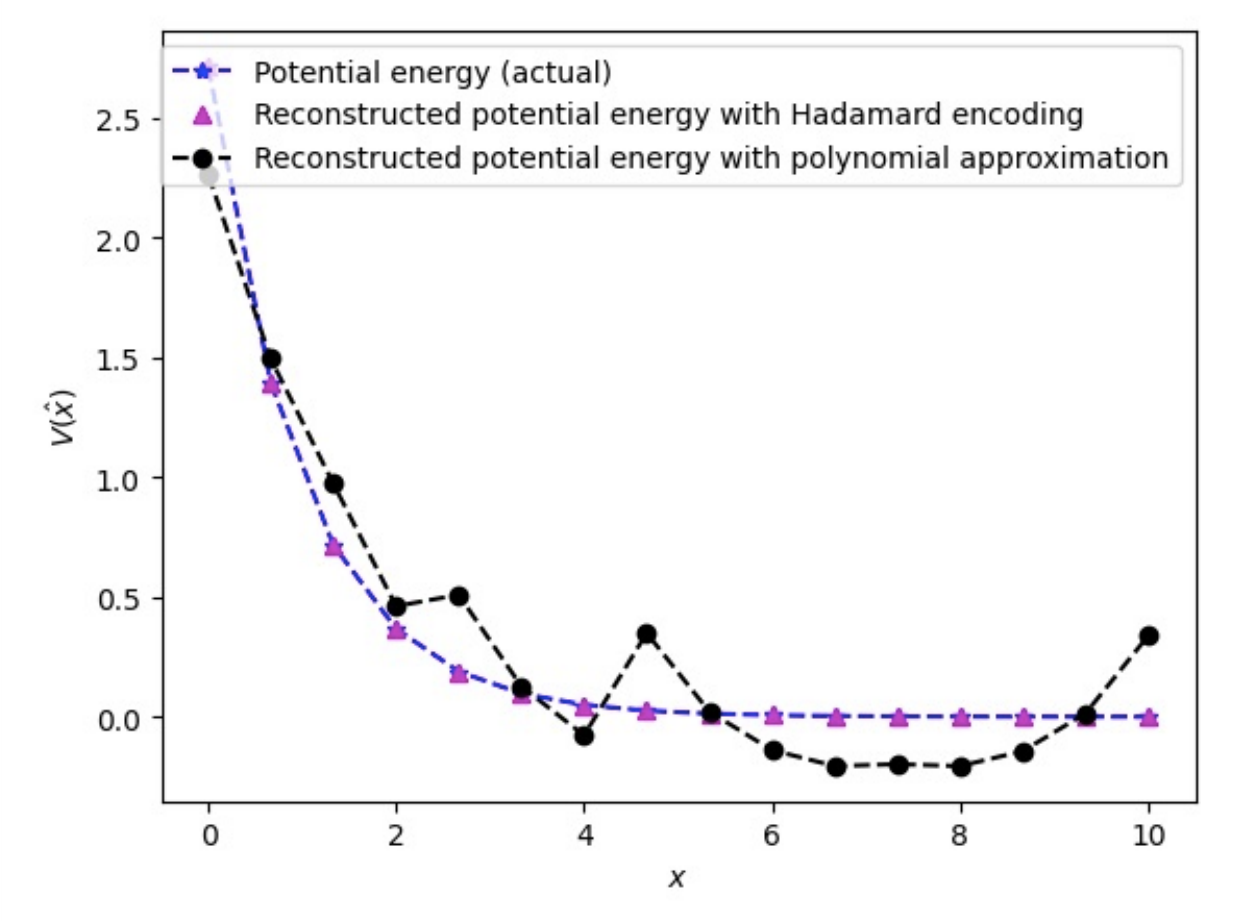} 
}
\caption{Potential energy encoding and reconstruction using Hadamard basis encoding and polynomial approximation encoding}
\label{construct_reconstruct}
\end{figure}

\subsection{ Potential energy evolution with $3^{rd}$-order polynomial approximation using a $3$-qubit quantum system}
\begin{figure}[htb!]
\centering
\subfloat[Potential energy encoding for $3^{rd}$ order polynomial approximation with $3$ qubit quantum circuit: Here a CCp gate is used with phase $0.190$, in addition to $3$ phase and $3$ controlled phase gates]{
\includegraphics[width=0.9\linewidth]{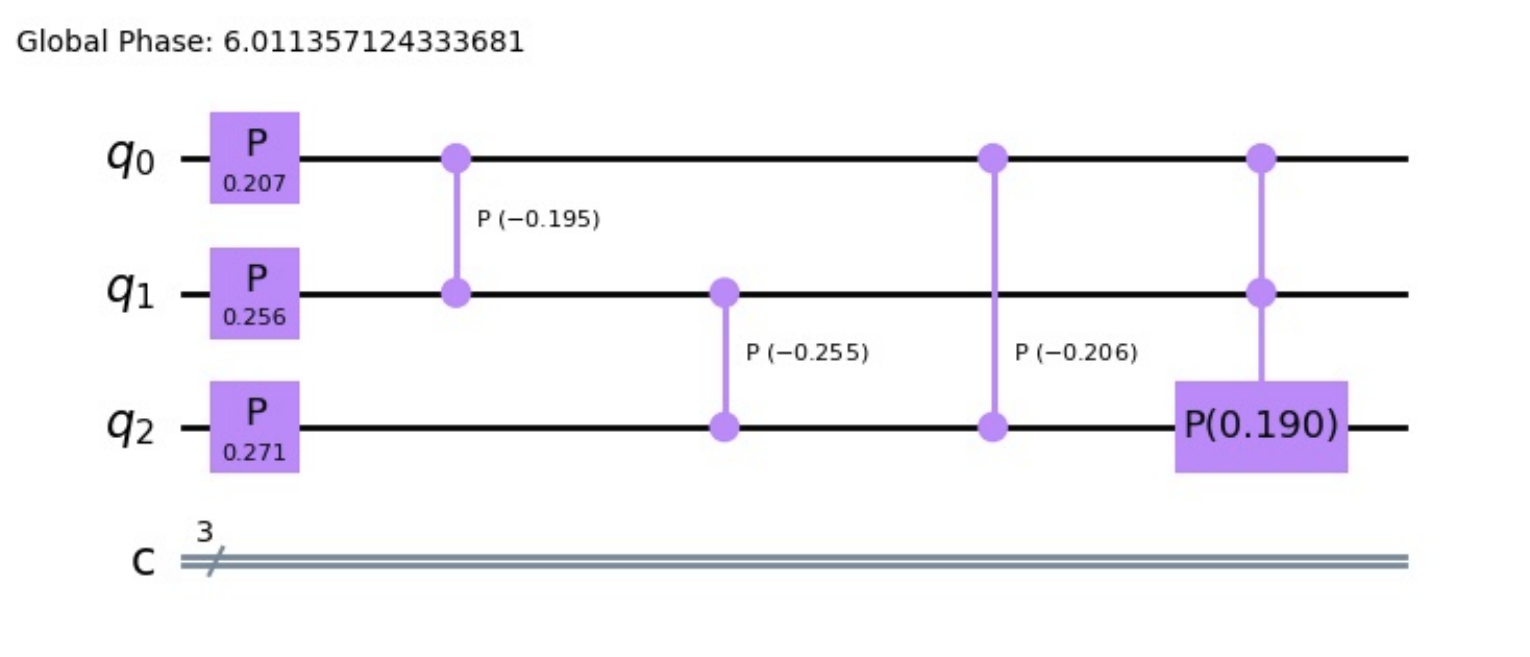} 
}
\hspace{0mm}
\subfloat[Potential energy encoding with $3^{rd}$ order polynomial approximation in $3$ qubit circuit: Imaginary part of $U_{PE}$ in vertical axis, discretized space ($x$) in horizontal axis]{
\includegraphics[width=0.49\linewidth]{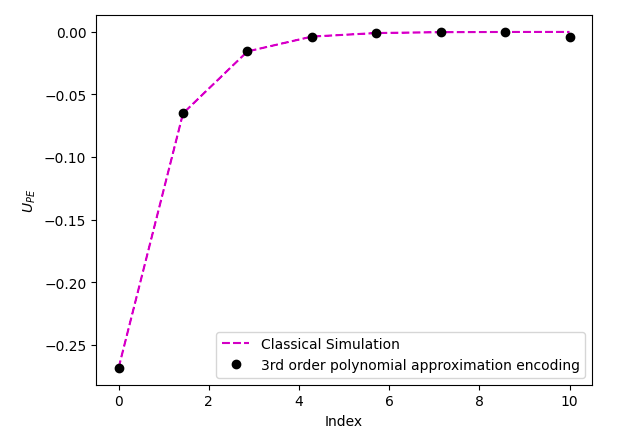} 
}
\subfloat[Potential energy reconstruction with $3^{rd}$ order polynomial approximation in $3$ qubit circuit]{
\includegraphics[width=0.49\linewidth]{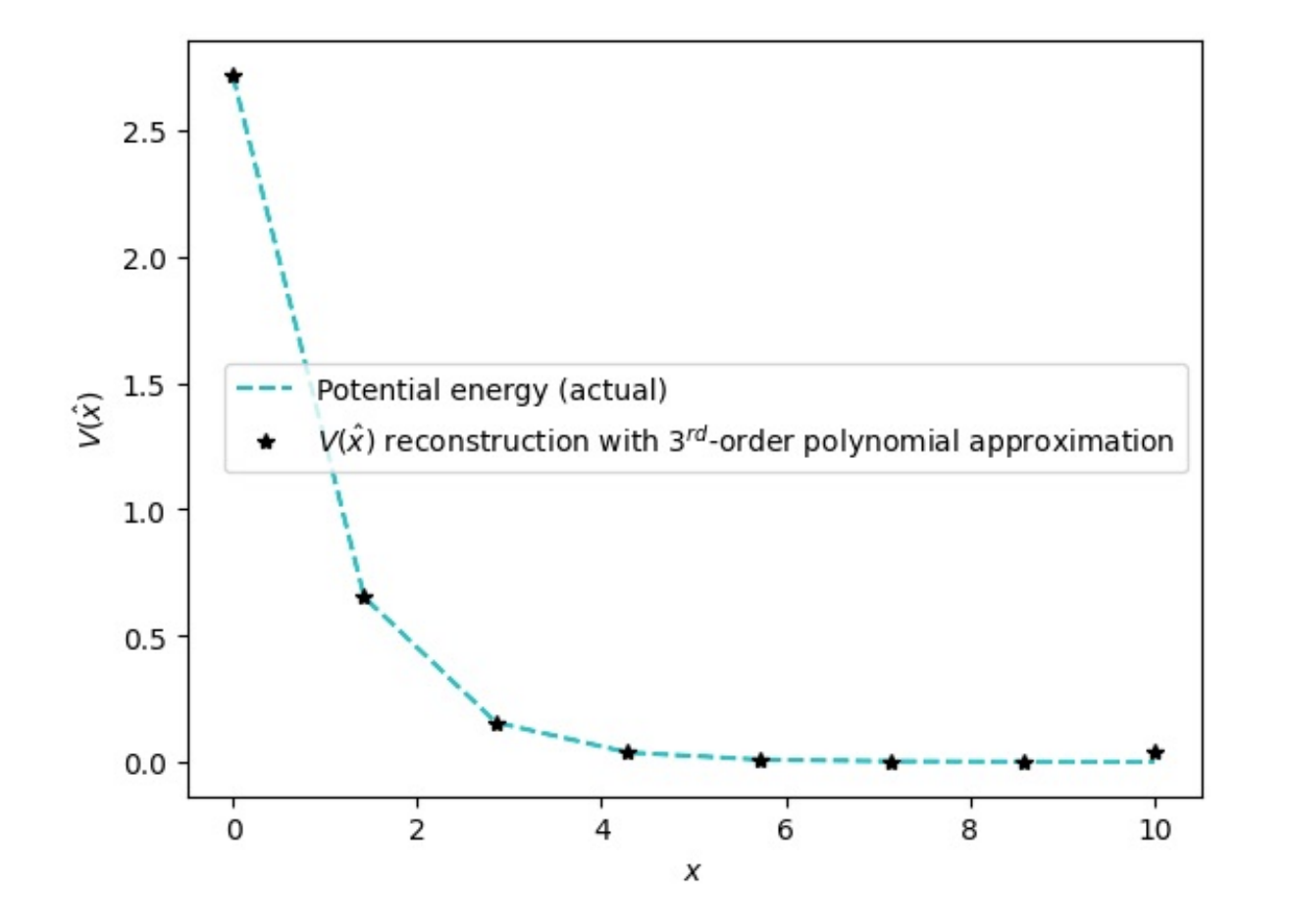} 
}
\caption{Potential energy encoding and reconstruction using polynomial encoding with $3^{rd}$ approximation}
\label{construct_reconstruct1}
\end{figure}

To improve the performance of the least square encoding of the potential energy, one can take higher-order polynomial approximation. For example, $3$-qubit quantum gates (e.g., controlled-controlled phase (CCp) gate) can be used in addition to phase and controlled phase gates for a $3^{rd}$ order polynomial approximation of the potential energy. An example is shown in Fig. \ref{construct_reconstruct1}.$(a)$ for a $3$-qubit quantum circuit. Here, we have used $3$ phase gates, $3$ controlled phase gates in an entangled state, and $1$ CCp gate. As a consequence, we have $8$ parameters (including $1$ global phase parameter) for the least square. The construction of the CCp gate may consider the composition of the unitary gate and CNOT gates (detailed in page. $182$ in \cite{chuang2002quantum}).  Using this $3^{rd}$ order polynomial approximation construction of the unitary evolution operator $U_{PE}$ is shown in Fig.\ref{construct_reconstruct1}.$(b)$, and the reconstruction of the potential energy $V(\hat{x})$ is shown in Fig.\ref{construct_reconstruct1}.$(c)$. The study of $2$-input gate-based encoding in Fig.\ref{construct_reconstruct}, and the $3$-input gate-based approximate encoding in Fig.\ref{construct_reconstruct1} shows that, with additional cost of higher-input quantum gate, the approximation accuracy can be improved. As higher input quantum gates are severely noisy in practical quantum computers presently, we focus circuit implementation primarily using $1$, and $2$ qubit gates with improved computational gate complexity for practical advantages.

\subsection{Example: Time evolution of the potential energy in NaI}

A prime example often chosen in the femtochemistry literature \cite{baskin2001freezing} to study the wave packet motion due in the presence of potential energy is Sodium Iodide (NaI) molecule. 
 The analytical expression of the potential energy ($V(\hat{x})$) of NaI as a function of atomic distance ($x$) is given by
\begin{align}
    V(x)= a_1 e^{-a_2 (x-r_1)}
\end{align}
with $a_1=0.0299$, $a_2=2.163$, and $r_1=5.102$ respectively. 

\subsubsection{Reconstruction with $2^{nd}$ order polynomial}

\begin{figure}[htb!]
    \centering
    \includegraphics[width=0.60\linewidth]{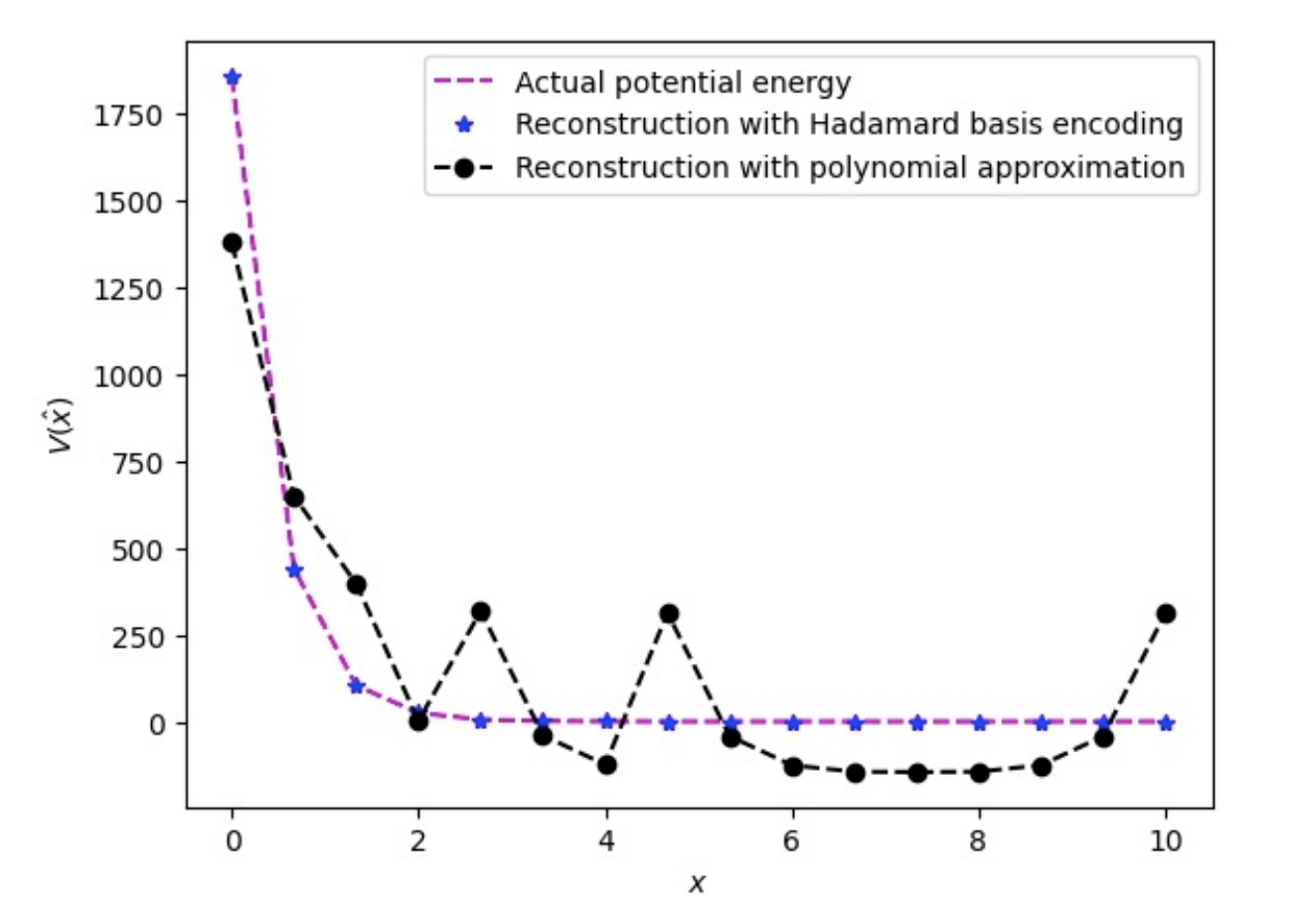}
    \caption{Potential energy reconstruction with Hadamard basis encoding and polynomial approximation encoding}
    \label{figreconfig}
\end{figure}

Here, in Fig. \ref{figreconfig}, we show the reconstruction of the potential energy function with two proposed techniques, i.e., Hadamard basis encoding and the polynomial approximation encoding. Here, we have used $4$ qubits to encode potential energy functions on a quantum computer. The Hadamard basis encoding circuit reconstructs the potential energy with high closeness to the actual (classical) potential energy curve. However, it takes $49$ quantum gates to encode  the potential energy function. On the other side, we have seen that a second-order polynomial approximation with $10$ quantum gates gives an approximate to the actual potential energy function. Two proposed methodologies portray that one can do a trade-off between complexity and accuracy while encoding the potential energy function on a quantum simulator. It is important to note that, although Hadamard basis encoding shows a very close approximation to the classical encoded potential energy, it may suffer with poor fidelity as compared to the polynomial approximated method while running the experiment on a real quantum computer.

\subsubsection{Reconstruction with $3^{rd}$ order polynomial}

\begin{figure}[htb!]
\centering
\subfloat[Potential energy encoding for $3^{rd}$ order polynomial approximation with $4$ qubit quantum circuit]{
\includegraphics[width=0.9\linewidth]{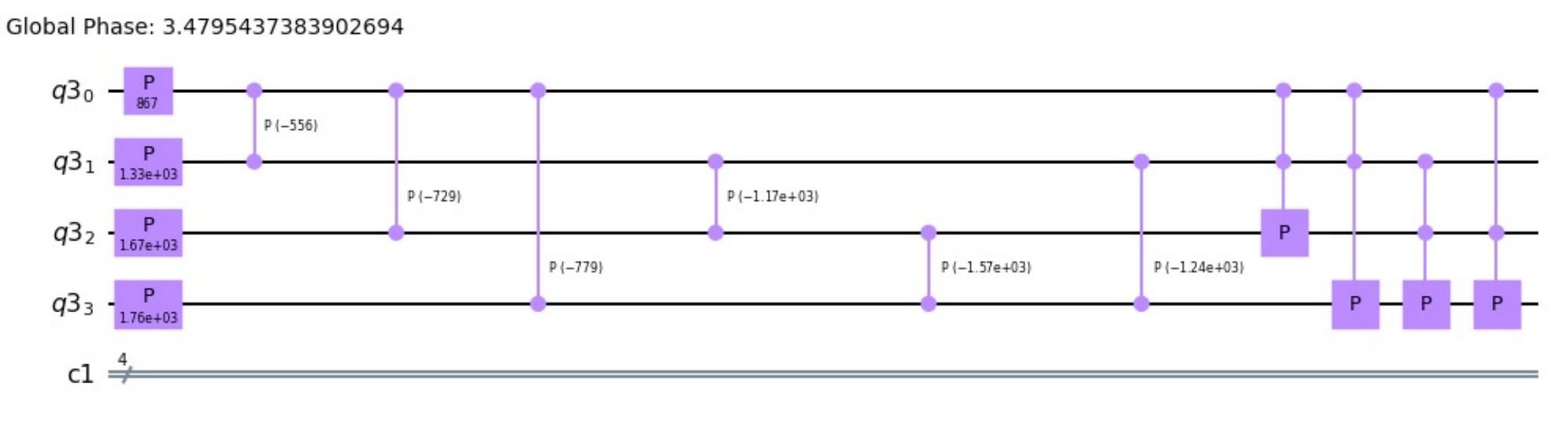} 
}
\hspace{0mm}
\subfloat[Potential energy encoding with $3^{rd}$ order polynomial approximation in $4$ qubit circuit: Imaginary part of $U_{PE}$ in vertical axis, discretized space ($x$) in horizontal axis]{
\includegraphics[width=0.49\linewidth]{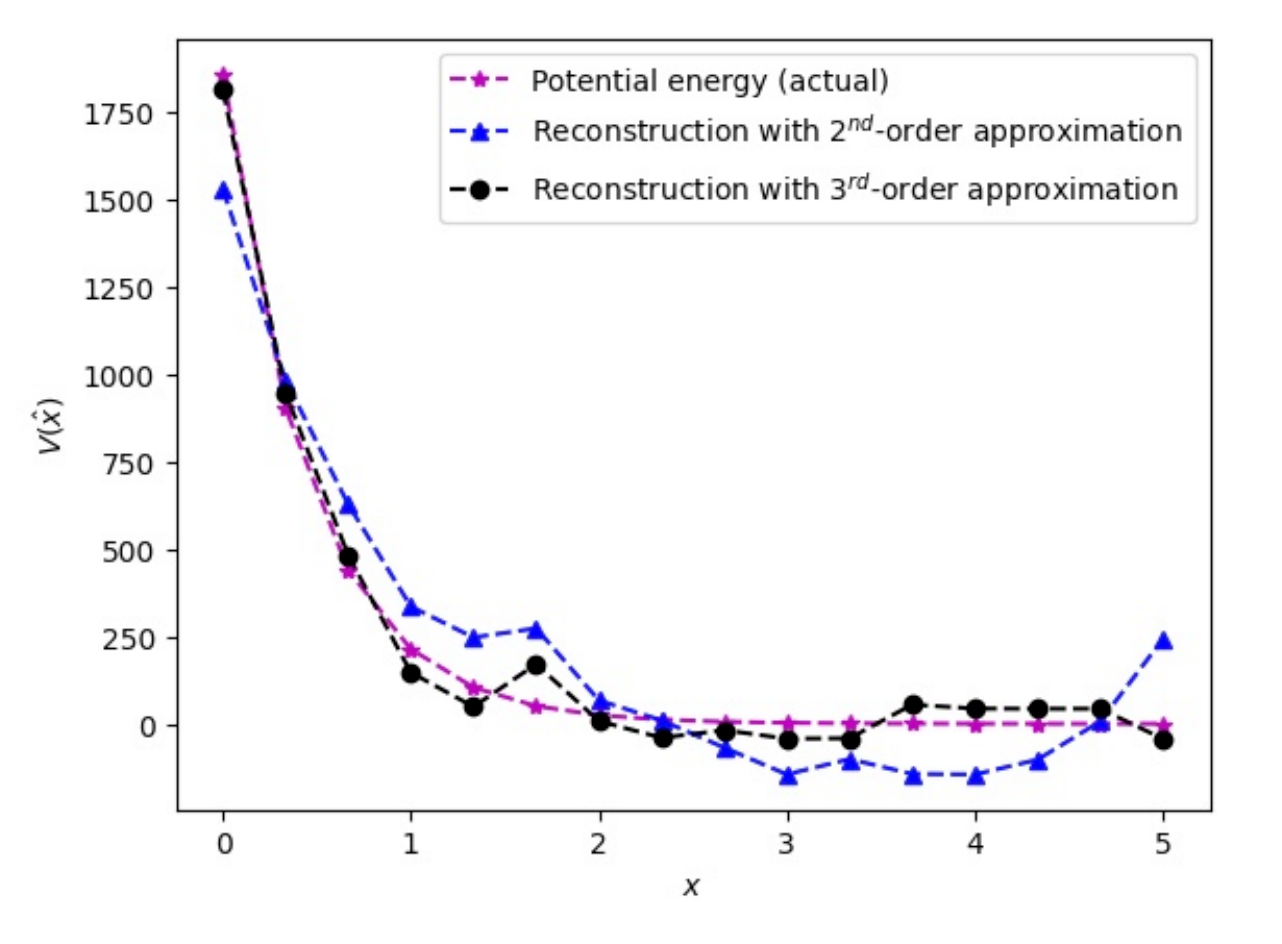} 
}
\caption{Potential energy of NaI reconstruction using $3^{rd}$ polynomial approximation}
\label{NaI_reconstruction}
\end{figure}

The potential energy reconstruction with a $4$ qubit quantum circuit embedding quantum gates for the $3^{rd}$ approximation is shown in Fig. \ref{NaI_reconstruction}. The quantum circuit with $4$ input qubit register as shown in Fig.
\ref{NaI_reconstruction}.$(a)$ possesses $4$ phase and $6$ controlled phase gates (same as the $2^{nd}$ order approximation). In addition, here we have added another $4$ CCp gates (which are $3$-input quantum gates) parameterised from the least square solution. The performance in the presence of the CCp gate compared with that of $2^{nd}$ order approximation is shown in Fig.\ref{NAI_potential}.$(b)$. It is observed that adding higher-order gates has better reconstruction than that of lower-order approximation of the potential energy at the cost of additional resources. Note that, the above reconstruction is based on the Qasm simulator result (which is not the performance of the actual simulator). The hardware performance for potential energy reconstruction is given in Section-\ref{fidelity_Section}.



\subsection{Gate complexity}
To show a quantitative picture of the computational resources required by the two different encoding schemes, viz. Hadamard basis encoding, and polynomial approximation encoding, we have shown a plot in Fig. \ref{gate_c}. The gate complexity with Hadamard basis encoding is approximate $\Theta(2^n)$, and the $2^{nd}$ order polynomial basis encoding
is given by $\Theta(n^2)$. Varying the qubit size $n=2$ to $n=10$, the number of quantum gates used by these two methods, we have shown the plot in Fig. \ref{gate_c}.

\begin{figure}[htb!]
\centering 
\includegraphics[width=0.49\linewidth]{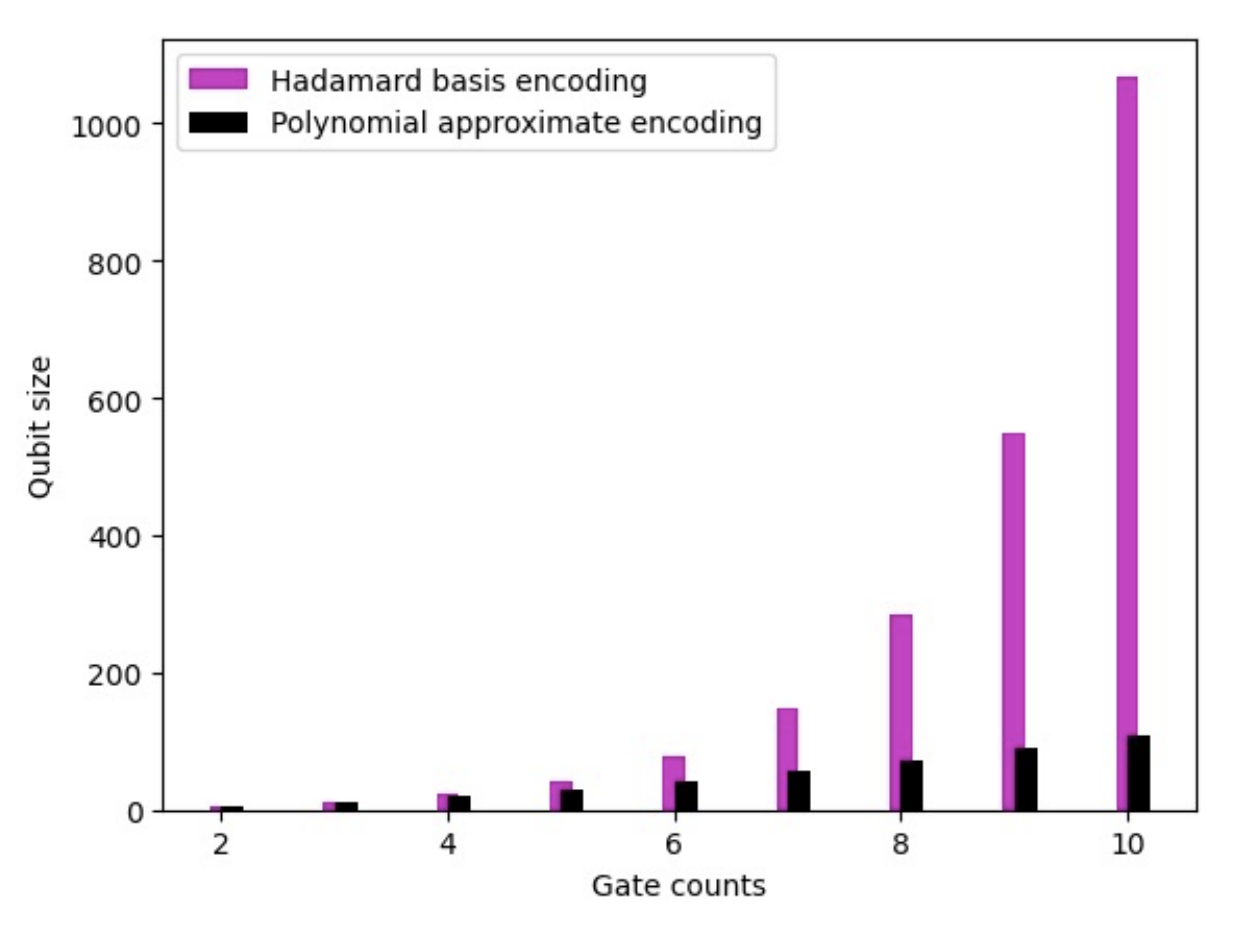} 
\caption{Gate complexity of the proposed encoding algorithms for potential energy encoding on a quantum circuit}
\label{gate_c}
\end{figure}

It is observed that the computational resources required by the polynomial approximate encoding are much lesser than the Hadamard encoding for a higher number of qubits. However, the Hadamard basis encoding method has higher accuracy of construction (of the unitary time evolution operator) and reconstruction (of the potential energy function) as it uses $2^n$ basis to encode $2^n$ functional points. The polynomial approximate method is a trade-off between accuracy and complexity. For practical quantum encoding of the potential energy function on a real quantum machine, the polynomial approximate encoding can be beneficial as it uses fewer computational resources as compared to the Hadamard basis encoding, thereby improving the quantum fidelity and cost of the resources.

\subsection{Fidelity performance}\label{fidelity_Section}
The evolution due to a complex diagonal matrix preserves the squared amplitude or the probability distribution of the wave function, since $\vert ae^{i\theta} \vert ^2 = \vert a \vert ^2$. As such, the fidelity of the evolved state with respect to the initial state should ideally be $1.0$. However, because of decoherence, the fidelity goes down with circuit depth. In order to measure the effectiveness of implementing these algorithms in present day quantum hardwares, and the impact the circuit depth may have on the fidelity, we used a swap test to measure the fidelity.

\begin{table}[htb!]
    \centering
    \begin{tabular}{|c|c|c|}
    \hline
    \textbf{Hamiltonian Encoding} & \textbf{Qasm Simulator} & \textbf{Hardware (ibm\_nairobi)}\\
    \hline
    Hadamard Encoding     & $1.0$ & $0.28$\\
         \hline
    \textcolor{teal}{$2^{nd}$ order poly encoding}  &  $1.0$ & $\textcolor{teal}{0.46}$\\
    \hline
    \textcolor{blue}{$3^{rd}$ order poly encoding}  &  $1.0$  & $0.40$\\
    \hline
    \end{tabular}
    \caption{Fidelity performance for $3$ qubit system on Qasm simulator and ibm\_nairobi \cite{ibm-quantum}}
    \label{tab1}
\end{table}

\begin{table}[htb!]
    \centering
    \begin{tabular}{|c|c|c|}
    \hline
    \textbf{Hamiltonian Encoding} & \textbf{Qasm Simulator} &  \textbf{Hardware (ibmq\_mumbai)} \\
    \hline
    Hadamard Encoding     & $1.0$ &$0.18$\\
         \hline
      \textcolor{teal}{$2^{nd}$ order poly encoding}  &  $1.0$ & $\textcolor{teal}{0.33}$\\
    \hline
    \textcolor{blue}{$3^{rd}$ order poly encoding ($1$ CCp)}  & $1.0$  & $0.27$\\
    \hline
     \textcolor{blue}{$3^{rd}$ order poly encoding ($2$ CCp)}  & $1.0$  & $0.2$ \\
    \hline
     \textcolor{blue}{$3^{rd}$ order poly encoding ($3$ CCp)}  & $1.0$  & $0.12$\\
    \hline
     \textcolor{blue}{$3^{rd}$ order poly Encoding ($4$ CCp)}  &  $1.0$ & $0.09$\\
    \hline
    \end{tabular}
    \caption{Fidelity performance for $4$ qubit system on Qasm simulator and ibm\_mumbai \cite{ibm-quantum}}
    \label{tab2}
\end{table}

We initialised the reference circuit with all qubits at state at $\ket{0}$ for measuring the fidelity. Table-\ref{tab1} gives the fidelity performance for a $3$ qubit circuit. In a Qasm simulator, as expected, every method performs equally well and we get a perfect fidelity. For the $3$ qubit circuit, we ran our experiments on an IBM quantum machine, called ibm\_nairobi. It is observed that the proposed $2^{nd}$ order approximate polynomial encoding method outperforms Hadamard encoding and higher-order approximation with a fidelity of $0.46$. 

We did a similar experiment for a $4$ qubit circuit on another IBM machine, called ibmq\_mumbai. $4$ qubit circuit, with larger depth as compared to the $3$ qubit circuit, sees a larger drop in fidelity. Table \ref{tab2} gives the details. The Hadamard encoding method achieves a fidelity of $0.18$, whereas the $2^{nd}$ order polynomial encoding has an improved fidelity of $0.33$. We have also shown the fidelity of $3^{rd}$ order polynomial approximation encoding.

The above experiments on real quantum hardwares show a comparative study of the proposed quantum encoding methods for encoding the potential energy function. Clearly, there are trade-offs one needs to consider when going for more accurate algorithms in a hardware.  


\section{Conclusion}
In this article, we have demonstrated a quantum Hamiltonian encoding framework for potential energy functions. The quantum time evolution operator consists of the potential energy and the kinetic energy parts. Here, we have discussed the potential energy function part. This will be an important ingredient in studying the evolution of wave functions of molecules, given the potential fields it experiences. This helps us in studying ionic and covalent chemical bonds, their formation and dissociation, etc. 

We have proposed two different encoding schemes here. The first method comprises combinations of Pauli-$Z$ and identity operators (that generate the Hadamard basis) and this helps us in encoding the potential energy functions with a very high degree of accuracy but with a complexity of $2^n$ for a $n$-qubit quantum circuit. In the  other proposed method, we have shown a polynomial approximation-based least square Hamiltonian encoding technique to reduce the computational gate complexity for practical usage in present-day noisy machines. We have shown the mathematical formulation for both schemes and discussed experimental results obtained on IBM quantum simulators and real hardware.

\section{Acknowledgement}

We are thankful to IBM Quantum for the overall support in this work, especially in giving access to the quantum machines for the research results. We acknowledge Dr. Anupama Ray, Dhiraj Madan, and Dr. SheshaShayee K Raghunathan of IBM Research, Bangalore, and Prof. Amit Kumar Dutta from IIT Kharagpur for their valuable suggestions. We would also like to thank Rajiv Sangle of IISc, Bangalore for the help in running the fidelity experiments.  



\bibliographystyle{IEEEtran}
\bibliography{Bibliography}%

\end{document}